\newtheorem{definition}{Definition}
\newtheorem{theorem}{Theorem}
\newenvironment{proof}[1][Proof]{\noindent\textbf{#1.} }{\ \rule{0.5em}{0.5em}}
\DeclareMathOperator*{\argmax}{arg\,max}
\newcommand{\lbl}{\mathtt{lbl}\xspace}
\newcommand{\tail}{\mathtt{tl}\xspace}
\newcommand{\head}{\mathtt{hd}\xspace}
\newcommand{\dist}{\mathtt{dist}\xspace}
\newcommand{\diam}{\mathtt{diam}\xspace}
\newcommand{\nb}{\mathtt{N}\xspace}
\newcommand{\pr}{\mathtt{priority}\xspace}
\newcommand{\est}{\mathtt{est}\xspace}
\newcommand{\degree}{\mathtt{deg}\xspace}
\newcommand{\hops}{\mathtt{hp}\xspace}
\newcommand{\rad}{\mathtt{rad}\xspace}
\newcommand{\subs}{\mathtt{subs}\xspace}
\journal{JoWS}
\begin{document}

\begin{frontmatter}

\title{Relaxing Relationship Queries on Graph Data}

\author[nju]{Shuxin~Li}
\ead{sxli@smail.nju.edu.cn}

\author[nju]{Gong~Cheng\corref{cor}}
\ead{gcheng@nju.edu.cn}
\cortext[cor]{Corresponding author; tel: +86~(0)25~89680923; fax: +86~(0)25~89680923}

\author[uta]{Chengkai~Li}
\ead{cli@uta.edu}


\address[nju]{National Key Laboratory for Novel Software Technology, Nanjing University, Nanjing 210023, China}
\address[uta]{Department of Computer Science and Engineering, University of Texas at Arlington, Arlington, Texas, United States}

\begin{abstract}
In many domains we have witnessed the need to
search a large entity-relation graph for direct and indirect relationships between a set of entities specified in a query.
A search result, called a semantic association (SA),
is typically a compact (e.g., diameter-constrained) connected subgraph containing all the query entities.
For this problem of SA search,
efficient algorithms exist but will return empty results
if some query entities are distant in the graph.
To reduce the occurrence of failing query and provide alternative results,
we study the problem of query relaxation in the context of SA search.
Simply relaxing the compactness constraint will sacrifice the compactness of an SA,
and more importantly, may lead to performance issues and be impracticable.
Instead, we focus on removing the smallest number of entities from the original failing query,
to form a maximum successful sub-query
which minimizes the loss of result quality caused by relaxation.
We prove that
verifying the success of a sub-query turns into
finding an entity (called a certificate) that satisfies a distance-based condition about the query entities.
To efficiently find a certificate of the success of a maximum sub-query,
we propose a best-first search algorithm
that leverages distance-based estimation to effectively prune the search space.
We further improve its performance by adding two fine-grained heuristics:
one based on degree and the other based on distance.
Extensive experiments over popular RDF datasets
demonstrate the efficiency of our algorithm, which is more scalable than baselines.
\end{abstract}
\begin{keyword}
semantic association search, complex relationship, query relaxation, graph data
\end{keyword}

\end{frontmatter}


\section{Introduction}\label{sect:intro}
Graph data (e.g., RDF data) representing binary relations between entities
is becoming the back end of increasingly many applications.
Graphs are particularly suitable for answering \emph{relationship queries}.
As a simple example, with the academic graph in Fig.~\ref{fig:erg},
answering a query like \emph{how is Dan related to ISWC 2019}
could be to conveniently look up arcs
that connect two particular vertices representing the two entities mentioned in the query:
\texttt{Dan} and \texttt{ISWC2019},
which are called \emph{query entities}.
In the literature, a relationship between two or more query entities
is commonly referred to as a \emph{semantic association}~(SA).
For two query entities,
an SA is usually a path or a path-like subgraph that connects them~\cite{rho,explass,vldb04,brahms,conkar,semrank,rex,explain,relfinder}.
More generally, for a set of two or more query entities,
an SA is a compact subgraph that connects all the query entities~\cite{sisp,iswc16,tkde17,ming,star,ceps}.
For example, the SA shown on the right-hand side of Fig.~\ref{fig:ppl},
which is a subgraph of the academic graph in Fig.~\ref{fig:erg},
is an answer to \emph{how are Alice, Bob, and Dan connected}.
It shows that Alice and Bob have papers accepted at a conference
which Dan is a PC member of.

\begin{figure}[!t]
\centering
\includegraphics[width=0.9\columnwidth]{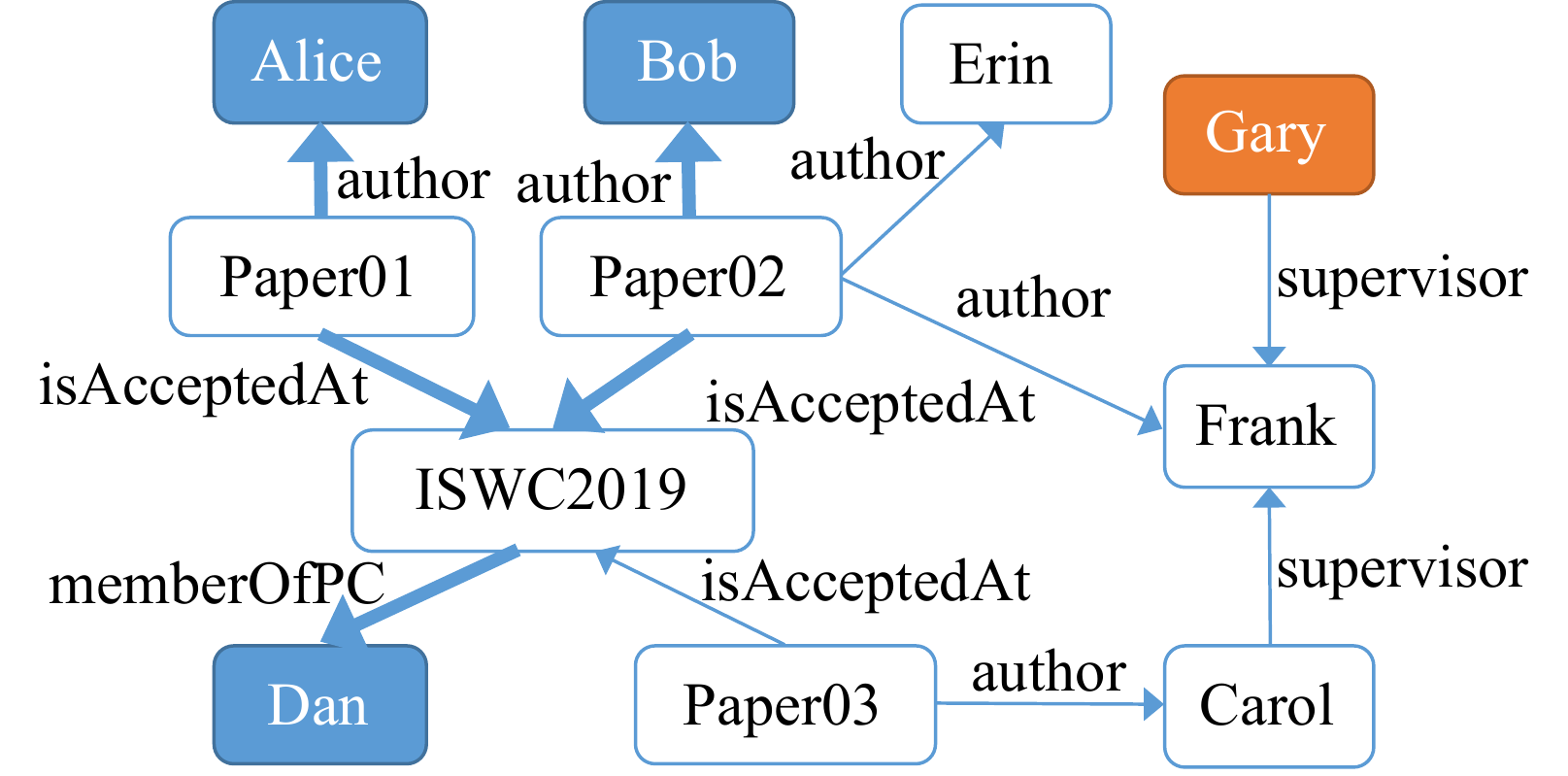}
\caption{An example of entity-relation graph,
where the five bold arcs induce a connected subgraph
representing a relationship between three entities in a relationship query:
\texttt{Alice}, \texttt{Bob}, and \texttt{Dan}.
The subgraph is referred to as an SA which is a result of that query.}\label{fig:erg}
\end{figure}

\begin{figure*}[!t]
\centering
\includegraphics[width=\linewidth]{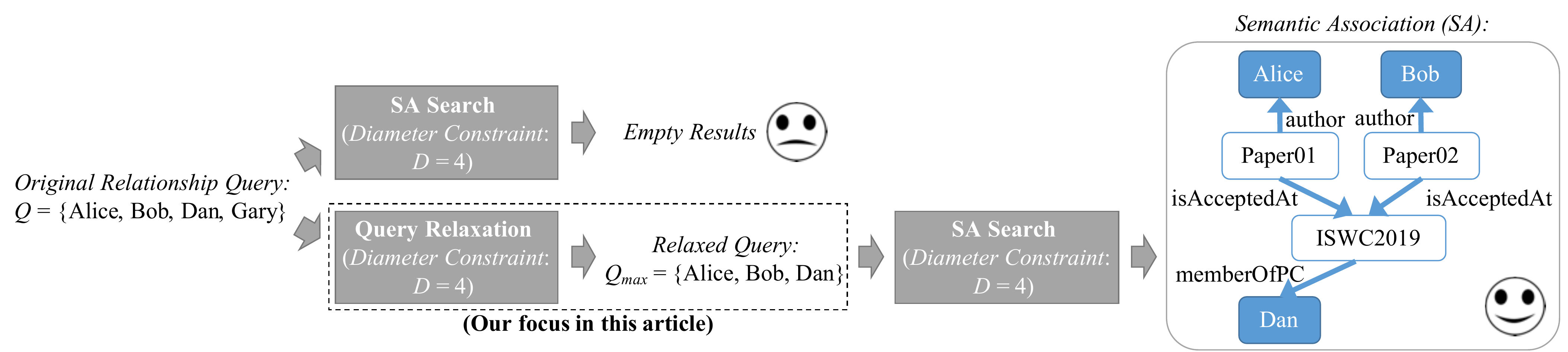}
\caption{An example of query relaxation for SA search on the entity-relation graph in Fig.~\ref{fig:erg}.}\label{fig:ppl}
\end{figure*}

\textbf{Application of SA Search.}
Searching a graph for SAs that connect a set of query entities,
called \emph{SA search} for short,
is a well-established research problem
and has found application in many domains.
For example, given a graph about the national security domain
including organizations, countries, people, terrorists, terrorist acts, etc.,
it is useful to detect notable SAs that connect a group of suspect airline passengers in a given flight~\cite{appns}.
In a social network, e.g., a co-authorship network of scientists,
SAs can clearly depict the key relationships between a group of scientists,
without the disturbance of many remote and uncorrelated scientists~\cite{ceps,sisp}.
SA search is also helpful in biomedical research~\cite{appbio},
and is an underlying technique for keyword search on graphs~\cite{banks,dpbf,blinks,banks2,rclique,le,ease,pruneddp}.

\textbf{Limitations of Existing Solutions.}
SA search may \emph{fail and yield empty results},
which can disappoint users of an SA search system.
It happens when some query entities are disconnected in the graph
and hence there are no connected subgraphs containing all the query entities.
More generally, it also happens when some query entities are connected but are far away from each other in the graph,
so that existing solutions that aim at structurally compact SAs may fail to produce any results.
Such solutions adopt various kinds of \emph{compactness constraints} on allowable SAs,
mainly in order to bound the search space and achieve satisfactory performance.
They place an upper bound on
the length of a path-structured SA~\cite{explass,brahms,explain},
on the number of vertices in an SA~\cite{sisp,rex,ming,ceps},
or on the radius or diameter of an SA~\cite{iswc16,tkde17,ease}.
For example, in Fig.~\ref{fig:erg},
if the diameter of an SA is bounded by~4,
no SAs in the graph can connect all of \texttt{Alice}, \texttt{Bob}, \texttt{Dan}, and \texttt{Gary},
because the distance between \texttt{Alice} and \texttt{Gary} is~5 which exceeds the bound,
even though \texttt{Alice}, \texttt{Bob}, and \texttt{Dan} are sufficiently close to each other.

Bounding is necessary and desired for practical applications where top-ranked SAs are requested. Computing the top-ranked SA in an unbounded space is usually formulated as a Steiner tree problem~\cite{star}, and is NP-hard. The performance of existing solutions is acceptable only when the search space is bounded and the bound is set to very small values~\cite{emp}, because the number of candidate SAs is exponential to the size of an allowable SA.
On the other hand, solutions with compactness constraints are believed cost-effective
because the relationship represented by a large SA,
which would be disregarded by those solutions,
is usually not meaningful or interesting to users~\cite{ease},
as demonstrated by our recent user study~\cite{tkde17}.

In spite of the necessity and benefits of adopting compactness constraint,
\emph{query failure would occur more often},
which has not been addressed in previous research.

\textbf{Research Goal.}
Subject to a predetermined compactness constraint which is practically needed for SA search,
our goal in this article is to reduce the occurrence of failing query and improve the usability of an SA search system,
Consequently, a better trade-off will be established between the performance of search and the quality of search results.
To this end, we will study \emph{query relaxation} techniques for SA search that can provide alternative search results instead of empty results.

\textbf{Methods and Challenges.}
One straightforward approach is to \emph{relax the compactness constraint} (e.g., lifting the upper bound of diameter to~5 in the above example),
which can produce an SA for the above query.
Despite the sacrifice of compactness,
this approach may not fundamentally solve the problem.
Recall that compactness constraint is introduced mainly in order to bound the search space.
It cannot be arbitrarily relaxed as needed
but has to be maintained at a small value
in order to achieve acceptable performance of search.
Our experiment results in Section~\ref{sect:expq} will show that: considerably many queries still fail
even though the compactness constraint has been lifted to the largest value to which existing search algorithms can scale.

Another approach, which we will explore in this article,
is to find SAs that connect not all but part of the query entities,
i.e., to \emph{relax the query entities}.
As illustrated in Fig.~\ref{fig:ppl},
without lifting the upper bound of diameter,
an SA of diameter~4 that connects \texttt{Alice}, \texttt{Bob}, and \texttt{Dan} (but not \texttt{Gary}) can be found as a search result and returned to the user.
This kind of query relaxation technique has been familiar to ordinary users.
For example, as a comparable application, Google Search retrieves documents that do not contain all the query keywords,
and it explicitly indicates the missing keywords for each document in the search results pages.
Following this paradigm,
research challenges in the context of SA search include:
\begin{enumerate}
    \item how to optimize the quality of relaxed search results (i.e., SAs) in terms of result completeness, and
    \item how to efficiently perform the optimization.
\end{enumerate}

\textbf{Research Contributions.}
Relaxation techniques have been considered for
relational queries~\cite{qrdb},
graph queries~\cite{qrsparql1},
XML search~\cite{qrxml},
and entity search~\cite{qrentity}.
However, query relaxation for SA search is a new research problem.
The task here is to remove some entities from a failing relationship query
such that the remaining sub-query could be successful,
i.e., an SA subject to the compactness constraint can be found as a result of the sub-query.
To optimize the quality of relaxed SAs and meet the first challenge,
we aim to minimize the loss of search result completeness by \emph{removing the smallest number of query entities},
so that the largest proportion of query entities can be connected by relaxed SAs.
This problem is non-trivial in consideration of
the magnitude of graph data and the allowable response time of a search system.
To develop an efficient solution and meet the second challenge,
our technical contributions are summarized as follows.
\begin{itemize}
  \item We prove that: the success of a relationship query can be indirectly but more efficiently verified by finding an entity (called a certificate) that satisfies a distance-based condition about the query entities.
  Following that, we devise a polynomial-time algorithm for query relaxation called CertQR, which is more scalable than an intuitive exponential-time solution adapted from a state-of-the-art algorithm for SA search~\cite{iswc16}.
  \item We then devise a best-first search algorithm called CertQR+, which leverages distance-based estimation to effectively prune the search space of CertQR.
  Furthermore, to guide CertQR+ to find an optimum solution earlier, we introduce two fine-grained heuristics: one based on degree and the other based on distance.
  The combination of these techniques achieves considerable performance improvement.
\end{itemize}

The remainder of this article is organized as follows.
Section~\ref{sect:probl} formulates the problem.
Section~\ref{sect:basic} and Section~\ref{sect:opt} present CertQR and CertQR+, respectively.
Section~\ref{sect:eval} reports experiments.
Section~\ref{sect:rw} discusses related work.
Section~\ref{sect:concl} concludes the article with future work.
\section{Problem}\label{sect:probl}
\subsection{Preliminaries}
The terminology defined here is compatible with that used in our previous work~\cite{iswc16,tkde17}.
Entities and their binary relations form a graph.
\begin{definition}[entity-relation graph]
  An entity-relation graph is a finite directed labeled graph denoted by $G = \langle E_G,A_G,\mathbb{R},\lbl_G \rangle$ where
  \begin{itemize}
    \item $E_G$~is a finite set of entities as vertices,
    \item $A_G$~is a finite set of arcs,
    each arc $a \in A_G$ directed from its tail vertex $\tail(a) \in E_G$ to its head vertex $\head(a) \in E_G$, and
    \item $\lbl_G : A_G \mapsto \mathbb{R}$ is a function that
    labels each arc $a \in A_G$ with a binary relation $\lbl_G(a) \in \mathbb{R}$.
  \end{itemize}
\end{definition}
\noindent Figure~\ref{fig:erg} shows an entity-relation graph, which will be used as a running example in this article.
RDF graph is a kind of entity-relation graph. Following~\cite{iswc16,tkde17}, we focus on the relations between instance-level entities and hence we ignore \texttt{rdf:type} and literals.

Although $G$~is directed, it is often treated as an undirected graph in SA search.
For example, when referring to a path in~$G$, we do not require its arcs to all go the same direction.
When referring to a tree in~$G$, we do not require its arcs to all go towards or all go away from a particular vertex.

A user may intend to search an entity-relation graph
for relationships between a particular set of entities.
\begin{definition}[relationship query]
  A relationship query, or a query for short,
  consists of $n$~entities ($n \geq 2$) in~$G$ denoted by $Q = \{qe_1, \ldots, qe_n\}$,
  where $qe_1, \ldots, qe_n \in E_G$ are called query entities.
\end{definition}
\noindent For example, $\{\texttt{Alice}, \texttt{Bob}, \texttt{Dan}\}$
is a relationship query submitted to the entity-relation graph in Fig.~\ref{fig:erg}.

An SA as a result of a relationship query is a minimal subgraph connecting all the query entities.
It is actually a Steiner tree that spans all the query entities.
We follow~\cite{iswc16,tkde17} to put an upper bound on the diameter of an allowable SA as a compactness constraint.
Before giving the formal definition of SA, we review some graph terminology.
\begin{definition}[graph terminology]
  The length of a path is the number of arcs it uses.
  The distance between two vertices, denoted by~$\dist(\cdot,\cdot)$, is the length of a shortest path between them,
  or~$+\infty$ if no such path exists.
  The diameter of a graph $G = \langle E_G,A_G,\mathbb{R},\lbl_G \rangle$, denoted by $\diam(G)$, is the largest distance between pairs of vertices:
  \begin{equation}
      \diam(G) = \max_{e_i,e_j \in E_G}{\dist(e_i,e_j)} \,.
  \end{equation}
  The eccentricity of a vertex is the largest distance between this vertex and other vertices in the graph. The radius of a graph, denoted by $\rad(\cdot)$, is the minimum eccentricity in the graph:
  \begin{equation}
      \rad(G) = \min_{e_i \in E_G}{\max_{e_j \in E_G}{\dist(e_i,e_j)}} \,.
  \end{equation}
  A central vertex is a vertex of minimum eccentricity.
  The neighbors of a vertex~$e$, denoted by $\nb(e)$, are the vertices that are adjacent from/to it:
  \begin{equation}
      \nb(e) = \{ e' \in E_G : \exists a \in A_G, \{\tail(a),\head(a)\}=\{e,e'\} \} \,.
  \end{equation}
\end{definition}

Now we formally define SA.
\begin{definition}[semantic association]\label{def:sa}
  A result of a relationship query~$Q$ is called a semantic association~(SA), denoted by $x = \langle E_x,A_x \rangle$,
  which is a subgraph of~$G$ with vertices~$E_x \subseteq E_G$ and arcs~$A_x \subseteq A_G$ satisfying
  \begin{itemize}
    \item $x$~contains all the query entities in~$Q$, i.e., $Q \subseteq E_x$,
    \item $x$~is connected,
    \item $x$~is minimal, i.e., none of its proper subgraphs satisfy the above two conditions, and
    \item $\diam(x) \leq D$, where $D$~is a diameter constraint.
  \end{itemize}
\end{definition}
\noindent From the minimality of~$x$, we infer that
$x$~is a (Steiner) tree where all the leaf vertices are query entities.
For example, given $D=4$,
the tree shown in Fig.~\ref{fig:ppl} is an SA of diameter~4 for the query $\{\texttt{Alice}, \texttt{Bob}, \texttt{Dan}\}$.

\subsection{Problem Statement}
SA search is to search an entity-relation graph for SAs
that are results of an input relationship query.
Depending on whether SA search yields empty results,
we classify relationship queries into two types.
\begin{definition}[successful query and failing query]
  A relationship query~$Q$ is successful
  if at least one SA can be found in~$G$ under the predefined diameter constraint.
  Otherwise, $Q$~is called a failing query.
\end{definition}
\noindent For example, the relationship query $\{\texttt{Alice}, \texttt{Bob}, \texttt{Dan}\}$ submitted to the entity-relation graph in Fig.~\ref{fig:erg}
is a successful query under $D \geq 4$
because the subgraph shown in Fig.~\ref{fig:ppl} is an SA of diameter~4 for this query.
The query $\{\texttt{Alice}, \texttt{Bob}, \texttt{Dan}, \texttt{Gary}\}$ is a failing query when $D \leq 4$.

\begin{definition}[sub-query]
  For two relationship queries~$Q$ and~$Q'$ satisfying $Q' \subseteq Q$, $Q'$~is called a sub-query of~$Q$.
\end{definition}
\noindent Let $\subs(Q)$ be the set of all the sub-queries of~$Q$.
For example, the query $\{\texttt{Alice}, \texttt{Bob}, \texttt{Dan}\}$ has four sub-queries:
$\{\texttt{Alice}, \texttt{Bob}\}$, $\{\texttt{Alice}, \texttt{Dan}\}$, $\{\texttt{Bob}, \texttt{Dan}\}$, and itself.
Recall that a relationship query contains at least two entities.

In case a relationship query fails, we aim to relax it by finding a sub-query such that
it is successful and it undergoes the smallest change from the original query.
We measure the change by the number of removed query entities.
\begin{definition}[query relaxation for SA search]\label{def:qr}
  Given a relationship query~$Q$ submitted to an entity-relation graph~$G$ under a diameter constraint~$D$,
  the problem of query relaxation is to find a sub-query of~$Q$, denoted by~$Q_{max}$,
  such that $Q_{max}$~is successful and contains the largest number of query entities in~$Q$:
  \begin{displaymath}
    Q_{max} = \argmax_{Q' \in \subs(Q) \text{ and } Q' \text{ is successful}}{|Q'|} \,.
  \end{displaymath}
  We define $Q_{max} = \emptyset$ if none of the sub-queries of~$Q$ are successful.
  We will have $Q_{max}=Q$ if $Q$~itself is successful.
\end{definition}
\noindent For example, under $D \geq 4$
on the entity-relation graph~$G$ in Fig.~\ref{fig:erg},
the relationship query $Q=\{\texttt{Alice}, \texttt{Bob}, \texttt{Dan}\}$ 
is successful, and hence $Q_{max}=Q$.
However, the query $Q=\{\texttt{Alice}, \texttt{Bob}, \texttt{Dan}, \texttt{Gary}\}$ fails,
and we have $Q_{max}=\{\texttt{Alice}, \texttt{Bob}, \texttt{Dan}\}$ or $\{\texttt{Bob}, \texttt{Dan}, \texttt{Gary}\}$.
Maximum successful sub-queries may not be unique.

The output of our problem is a relaxed query rather than the results of the said query. This separation of query relaxation and SA search decouples our work from downstream tasks, and hence maximizes our research’s applicability in the real world. For example, some downstream approach searches for all possible SAs and mines their frequent patterns ~\cite{iswc16}, whereas other approaches can perform search algorithms that are tailored to directly find top-ranked SAs according to various ranking criteria~\cite{tkde17}. All these downstream solutions can use our approach to preprocess an input relationship query to avoid query failure. By contrast, if we focus on directly returning the results of a relaxed query, we will have to choose a specific downstream task (e.g., a specific ranking function) and design a hybrid approach that combines query relaxation and SA search, which may be more efficient than two separate steps. However, its applicability would be limited. We will explore this direction in future work.

\subsection{Discussion on Intuitive Solutions}\label{sect:int}
We are among the first to study query relaxation for SA search.
Before introducing our algorithms, we discuss the shortcomings of two intuitive solutions.
The discussion could help explain the difficulty of the problem.

The first intuitive solution is to exhaustively check all possible sub-queries
in non-increasing order of the number of query entities they contain,
and either return the first sub-query that is successful
or return~$\emptyset$ if none of the sub-queries are successful.
To check a sub-query,
we could perform an existing algorithm for SA search, e.g.,~\cite{iswc16}.
Immediately when the search algorithm finds the first result,
we know that the sub-query is successful.
Otherwise, if the algorithm ends without outputting any result,
the sub-query fails.
One shortcoming of this solution is its \emph{exponential running time}.
The time of performing SA search to check a sub-query is exponential to the diameter constraint~\cite{iswc16},
and the number of sub-queries to be checked is exponential to the number of entities in the original query.
We will optimize this solution and use it as a baseline in our experiments.
However, more scalable solutions are our focus in this article.

The second intuitive solution is to calculate the distance between every pair of query entities.
The idea is:
in a successful sub-query,
the distance between every pair of query entities is not larger than the diameter constraint.
However, this is \emph{only necessary but not sufficient} for the success of a sub-query.
The issue is related to the minimality of an SA
which is required to be a tree,
in order to follow common practice in the research of SA search
and be compatible with existing search algorithms and ranking criteria, e.g.,~\cite{iswc16,tkde17,star}.
For a counterexample, consider a small entity-relation graph that is a triangle,
where all the three entities form a relationship query.
Although the distance between every pair of query entities is~1,
the query still fails under $D=1$,
because all the spanning trees of the graph have a diameter of~2,
thereby violating the diameter constraint.
\section{\texorpdfstring{C\MakeLowercase{ert}QR}{CertQR}: Certificate-based Algorithm}\label{sect:basic}
We outline our idea as follows.
The success of a relationship query can be indirectly but more efficiently verified by finding an entity (called a certificate) that satisfies a distance-based condition about the query entities.
A maximum sub-query that a particular entity is a certificate of the success of
can be found in polynomial time.
Clearly there are at most linearly many candidate certificates to consider.
Therefore, our proposed algorithm for query relaxation, called CertQR, runs in polynomial time.

We will introduce the notion of certificate in Section~\ref{sect:cert},
and present the CertQR algorithm in Section~\ref{sect:bscalg}.

\subsection{Certificate of Successful Sub-query}\label{sect:cert}
We will informally illustrate our idea, and then formalize the notion of certificate.
We use $\dist$ to specifically denote the distance between two entities in~$G$.

\subsubsection{Basic Idea}
Under a diameter constraint~$D$,
a relationship query~$Q$ submitted to an entity-relation graph~$G$
is successful \emph{only if} there is an entity $c \in E_G$ that
is at most $\left\lceil{\frac{D}{2}}\right\rceil$~hops away from every query entity in~$Q$.
The existence of~$c$ is a necessary condition for the success of~$Q$, which we will prove later.
For example,  under $D=4$,
the query $\{\texttt{Alice}, \texttt{Bob}, \texttt{Dan}, \texttt{Gary}\}$
submitted to~$G$ in Fig.~\ref{fig:erg} is not successful,
because no such entity~$c$ exists.
Its sub-query $\{\texttt{Alice}, \texttt{Bob}, \texttt{Dan}\}$ 
is successful, and we have $c=\texttt{ISWC2019}$ because
\begin{equation}\label{eq:cert}
\begin{split}
  \dist(\texttt{ISWC2019},\texttt{Alice}) & = 2 \leq \left\lceil{\frac{D}{2}}\right\rceil \,, \\
  \dist(\texttt{ISWC2019},\texttt{Bob}) & = 2 \leq \left\lceil{\frac{D}{2}}\right\rceil \,, \\
  \dist(\texttt{ISWC2019},\texttt{Dan}) & = 1 \leq \left\lceil{\frac{D}{2}}\right\rceil \,.
\end{split}
\end{equation}

When $D$~is even, the existence of~$c$ is also a sufficient condition for the success of~$Q$,
which we will prove later.
However, it is not a sufficient condition when $D$~is odd.
For example, when $D=3$, the query $\{\texttt{Alice}, \texttt{Bob}, \texttt{Dan}\}$ is not successful
even though the inequalities in Eq.~(\ref{eq:cert}) hold.
The issue is due to query entities that are exactly $\left\lceil{\frac{D}{2}}\right\rceil$~hops away from~$c$,
which are referred to as \emph{critical query entities},
e.g., \texttt{Alice} and \texttt{Bob} in the above example which satisfy
\begin{equation}\label{eq:d3}
\begin{split}
  \dist(\texttt{Alice},\texttt{Bob}) & = \dist(\texttt{Alice},\texttt{ISWC2019}) \\
  & \quad + \dist(\texttt{ISWC2019},\texttt{Bob}) \\
  & = \left\lceil{\frac{D}{2}}\right\rceil + \left\lceil{\frac{D}{2}}\right\rceil > D \,, \\
\end{split}
\end{equation}
\noindent so the diameter constraint is violated.
To establish a sufficient condition when $D$~is odd, we observe that:
if $c$~has a neighbor $c' \in E_G$ such that
it is $(\left\lceil{\frac{D}{2}}\right\rceil - 1)$~hops away from \emph{every} critical query entity,
the distance between every pair of critical query entities
will not be larger than $2(\left\lceil{\frac{D}{2}}\right\rceil - 1) = D-1$,
thereby complying with the diameter constraint.
Using the existence of~$c'$ as an additional condition, when $D=3$ in the above example,
$c=\texttt{ISWC2019}$ will not mistakenly imply the success of the query $\{\texttt{Alice}, \texttt{Bob}, \texttt{Dan}\}$
because none of the neighbors of~$c$ qualify for~$c'$.
In the meantime, $c=\texttt{ISWC2019}$ correctly implies the success of the query $\{\texttt{Dan}, \texttt{Erin}, \texttt{Frank}\}$
with $c'=\texttt{Paper02}$.
The existence of $\langle c,c' \rangle$ is a sufficient and necessary condition for the success of~$Q$ when $D$~is odd,
which we will prove later.

To sum up, the existence of such an entity~$c$ (when $D$~is even) or a pair of entities $\langle c,c' \rangle$ (when $D$~is odd)
is a necessary and sufficient condition for the success of~$Q$.
We call~$c$ or $\langle c,c' \rangle$ a \emph{certificate} of the success of~$Q$.

\subsubsection{Formal Definition}
Theorem~\ref{THE:CERT} formalizes the above idea.
A constructive proof is presented.

\begin{theorem}[certificate]\label{THE:CERT}
  A relationship query~$Q$ submitted to an entity-relation graph~$G$ under a diameter constraint~$D$
  is successful if and only if $\exists c \in E_G$ such that
  \begin{enumerate}
    \item $\forall qe \in Q$, $\dist(qe, c) \leq \left\lceil{\frac{D}{2}}\right\rceil$, and
    \item if $D$~is odd and $\exists qe \in Q$ such that $\dist(qe, c) = \left\lceil{\frac{D}{2}}\right\rceil$, then $c$~has a neighbor~$c' \in \nb(c)$ such that
    $\forall qe \in Q$
    that satisfies $\dist(qe, c) = \left\lceil{\frac{D}{2}}\right\rceil$,
    $\dist(qe, c') = \left\lceil{\frac{D}{2}}\right\rceil - 1$.
  \end{enumerate}
  Such an entity~$c$ (when $D$~is even) or a pair of entities $\langle c,c' \rangle$ (when $D$~is odd) is called a \emph{certificate} of the success of~$Q$.
  For convenience, whenever $D$~is even or odd, we consistently refer to~$c$ as a certificate.
  Given a certificate~$c$, $qe \in Q$ that satisfies $\dist(qe, c) = \left\lceil{\frac{D}{2}}\right\rceil$ is called a \emph{critical query entity}.
\end{theorem}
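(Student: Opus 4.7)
The plan is to prove both directions of the biconditional separately, exploiting the fact that any SA is a tree.

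For necessity, I would start from an SA $x$ of $Q$ and extract the certificate from the center of $x$. Classical tree-center theory gives me that the center is either a single vertex (if $\diam(x)$ is even) or an adjacent pair of vertices (if $\diam(x)$ is odd), and every central vertex has eccentricity $\lceil \diam(x)/2 \rceil \leq \lceil D/2 \rceil$. Taking $c$ to be any central vertex of $x$, Condition~1 is immediate because graph distances in $G$ are dominated by tree distances inside the subgraph $x$. For Condition~2 (with $D$ odd), I would argue that the existence of a critical $qe^*$ forces $\diam(x) = D$, so the center of $x$ is an edge $(c,c')$; every critical $qe^*$ then necessarily lies in the $c'$-side component after removing this center edge, giving $\dist(c',qe^*) \leq \lceil D/2 \rceil - 1$, and the matching lower bound is the triangle inequality in $G$ applied to the adjacency of $c$ and $c'$.

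For sufficiency I would construct an SA directly from the certificate. In the easy case (either $D$ is even, or $D$ is odd with no critical query entity), I would pick a shortest $c$-$qe$ path in $G$ for every $qe \in Q$ and extract a BFS tree of their union rooted at $c$; every $qe$ sits at depth at most $\lceil D/2 \rceil$, so any two leaves are at distance at most $D$. For odd $D$ with at least one critical query entity, I would instead route each critical $qe$ along a path $c \to c' \to \ldots \to qe$ of length $\lceil D/2 \rceil$ (the $c'$-to-$qe$ portion being a shortest path in $G$) and each non-critical $qe$ along a shortest $c$-$qe$ path of length at most $\lceil D/2 \rceil - 1$, then take a BFS tree from $c$ in which parents are chosen so that every critical $qe$ lands inside the subtree of $c'$. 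A case split on the LCA of two leaves---both critical (LCA inside the $c'$-subtree), both non-critical (LCA is $c$), or mixed (LCA is $c$)---caps the diameter at $D$. A final pruning of non-query leaves restores the minimality required by Definition~\ref{def:sa}.

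The step I expect to be the main obstacle is arguing that the BFS tree for odd $D$ can indeed be chosen so that every critical query entity sits below $c'$. The key enabling observation is that every shortest $c'$-to-critical-$qe$ path must avoid $c$ (otherwise the triangle inequality would give $\dist(c',qe) \geq 1 + \lceil D/2 \rceil$, contradicting the certificate condition on $c'$), so the BFS from $c$ can branch first to $c'$ and then follow shortest paths inside $G$ with $c$ deleted in order to reach every critical entity. Once this routing issue is settled, the diameter bound is a routine case analysis, and the remaining ingredients (the tree-center facts and the pruning step for minimality) are standard.
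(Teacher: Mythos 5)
Your proof is correct and follows essentially the same route as the paper's: necessity by taking $c$ to be a central vertex of an SA and using the tree radius--diameter relation (your appeal to Jordan's center-edge theorem for odd $\diam(x)$ is just a repackaging of the paper's argument that every critical entity's path must pass through the same neighbor~$c'$), and sufficiency by merging shortest paths from~$c$, rerouting critical entities through~$c'$, and pruning to a minimal tree. Your BFS-tree construction and explicit pruning step are slightly more careful about tree-ness and minimality than the paper's ``consistent deterministic choice of shortest paths,'' but the underlying idea is identical.
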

\begin{proof}
We present a constructive proof.
Recall that the distance between two entities in~$G$ is denoted by $\dist$.
We denote their distance in an SA~$x$ by $\dist_x$.

\textbf{Proof of Necessity.}
Let~$c$ be a central vertex of an SA $x = \langle E_x,A_x \rangle$ that is a result of~$Q$.
We will show that $c$~satisfies the two conditions in the theorem.

For the first condition,
recall that $x$~is a tree,
which satisfies $\rad(x) = \left\lceil{\frac{\diam(x)}{2}}\right\rceil \leq \left\lceil{\frac{D}{2}}\right\rceil$.
Accordingly, $\forall qe \in Q \subseteq E_x$,
$\dist_x(qe,c) \leq \left\lceil{\frac{D}{2}}\right\rceil$.
Immediately we have \\
$\dist(qe,c) \leq \left\lceil{\frac{D}{2}}\right\rceil$
because $x$~is a subgraph of~$G$.

For the second condition,
as $x$~is a tree,
let~$p_{qe}$ be the unique (and hence shortest) path between~$c$ and a critical query entity~$qe$ in~$x$.
The length of~$p_{qe}$ is exactly~$\left\lceil{\frac{D}{2}}\right\rceil$;
it is not larger than~$\left\lceil{\frac{D}{2}}\right\rceil$ because
$\dist_x(qe,c) \leq \left\lceil{\frac{D}{2}}\right\rceil$;
it is not smaller than~$\left\lceil{\frac{D}{2}}\right\rceil$ because
$\dist(qe, c) = \left\lceil{\frac{D}{2}}\right\rceil$.
Therefore, $p_{qe}$~is a shortest path in~$G$.
Let~$c'$ be $c$'s neighbor in~$p_{qe}$.
We will show that $c'$~satisfies the second condition.
If $qe$~is the only critical query entity,
the second condition will be trivially satisfied.
Otherwise, for any critical query entity~$qe'$ in~$x$ other than~$qe$,
the unique and shortest path between~$c$ and~$qe'$ in~$x$
(which is also a shortest path in~$G$)
also passes through~$c'$ because otherwise,
we would have $\dist_x(qe,qe') = \left\lceil{\frac{D}{2}}\right\rceil + \left\lceil{\frac{D}{2}}\right\rceil > D$ when $D$~is odd,
violating the diameter constraint.
Therefore, $\dist(qe',c') = \left\lceil{\frac{D}{2}}\right\rceil - 1$.

\textbf{Proof of Sufficiency.}
When $D$~is even,
for each query entity $qe \in Q$, we choose a shortest path between~$qe$ and~$c$ in~$G$.
All of these paths are merged into a connected subgraph~$x$, which clearly satisfies $\diam(x) \leq D$.
In particular, if the shortest paths between two vertices are not unique,
we consistently choose a particular one of them in a deterministic way to avoid cycles.
It can be determined with the help of a fixed order of the arcs in~$A_G$, e.g., in alphabetical order of their IDs.
However, the specific order is not important.
This ensures that $x$~is a tree and hence is minimal.
Therefore, $x$~is an SA and $Q$~is successful.

When $D$~is odd, we construct a minimal connected subgraph~$x$ in a similar way.
In particular, when $qe \in Q$ is a critical query entity,
we firstly choose a shortest path between~$qe$ and~$c'$ in~$G$,
and then merge that path with the arc between~$c'$ and~$c$ to form a shortest path between~$qe$ and~$c$.
This ensures that $\diam(x) \leq D$.
Therefore, $x$~is an SA and $Q$~is successful.
\end{proof}

Certificates may not be unique.
In Fig.~\ref{fig:erg}, under $D=4$,
the relationship query $\{\texttt{Alice}, \texttt{Dan}\}$ is successful.
Both \texttt{Paper01} and \texttt{ISWC2019} are certificates of its success.

\begin{figure}[!t]
\begin{algorithmic}[1]
  \REQUIRE An entity-relation graph~$G$, a diameter constraint~$D$, and a relationship query~$Q$.
  \ENSURE $Q_{max}$ --- a maximum successful sub-query of~$Q$.
  \STATE $qu \leftarrow$ empty queue
  \STATE $visited \leftarrow \emptyset$
  \FORALL{$qe \in Q$}
    \STATE $qu$.Enqueue($\langle qe,qe \rangle$)
    \STATE $visited \leftarrow visited \cup \{qe\}$
  \ENDFOR
  \STATE $Q_{max} \leftarrow \emptyset$
  \WHILE{$qu$ is not empty}
    \STATE $\langle e,sqe \rangle \leftarrow qu$.Dequeue()
    \STATE $Q_e \leftarrow$ OptWithCert($G$, $D$, $Q$, $e$, $Q_{max}$)
    \IF{$|Q_e| > |Q_{max}|$}
      \STATE $Q_{max} \leftarrow Q_e$
    \ENDIF
    \IF{$\dist(e,sqe) < \left\lceil{\frac{D}{2}}\right\rceil$}
      \FORALL{$e' \in \nb(e)$}
        \IF{$e' \notin visited$}
          \STATE $qu$.Enqueue($\langle e',sqe \rangle$)
          \STATE $visited \leftarrow visited \cup \{e'\}$
        \ENDIF
      \ENDFOR
    \ENDIF
  \ENDWHILE
  \RETURN $Q_{max}$
\end{algorithmic}
\caption{CertQR: a certificate-based algorithm for query relaxation.}
\label{alg:basic}
\end{figure}

\subsection{Certificate-based Algorithm}\label{sect:bscalg}
Based on the notion of certificate, we propose an algorithm for query relaxation.

\begin{figure}[!t]
\begin{algorithmic}[1]
  \REQUIRE {An entity-relation graph~$G$, a diameter constraint~$D$, a relationship query~$Q$, an entity~$c \in E_G$,
  and a known successful sub-query~$Q_{known}$ (or~$\emptyset$ if not known).}
  \ENSURE {$Q_c$ --- a maximum sub-query of~$Q$ that $c$~is a certificate of the success of.
  The algorithm outputs~$\emptyset$ when: $c$~is not a certificate of the success of any sub-query, or no such sub-query is larger than~$Q_{known}$.}
  \STATE $Q^1 \leftarrow \{qe \in Q : \dist(qe,c) \leq \left\lceil{\frac{D}{2}}\right\rceil\}$
  \STATE $Q^2 \leftarrow \{qe \in Q : \dist(qe,c) = \left\lceil{\frac{D}{2}}\right\rceil\}$
  \IF{$|Q^1| > |Q_{known}|$}
      \IF{$D$~is even \OR $|Q^2| \leq 1$}
      \STATE $Q_c \leftarrow Q^1$
    \ELSE
      \FORALL{$c' \in \nb(c)$}
        \STATE $R_{c'} \leftarrow \{qe \in Q^2 : \dist(qe, c') = \left\lceil{\frac{D}{2}}\right\rceil - 1\}$
      \ENDFOR
      \STATE $Q_c \leftarrow (\argmax_{R_{c'}}{|R_{c'}|}) \cup (Q^1 \setminus Q^2)$
    \ENDIF
    \IF{$|Q_c| > 1$}
      \RETURN $Q_c$
    \ELSE
      \RETURN $\emptyset$
    \ENDIF
  \ELSE
    \RETURN $\emptyset$
  \ENDIF
\end{algorithmic}
\caption{The OptWithCert algorithm.}
\label{alg:owc}
\end{figure}

\subsubsection{Algorithm Design}
The problem of query relaxation formulated in Definition~\ref{def:qr}
is to find a maximum successful sub-query of~$Q$, denoted by~$Q_{max}$.
Following Theorem~\ref{THE:CERT},
it turns into finding a maximum sub-query whose success has a certificate.
Such a certificate, if it exists,
is at most $\left\lceil{\frac{D}{2}}\right\rceil$~hops away from every query entity in~$Q_{max}$.
Therefore, we can exhaustively search all the entities
that are at most $\left\lceil{\frac{D}{2}}\right\rceil$~hops away from each query entity in~$Q$.
Each of these entities may be a certificate of the success of one or more sub-queries of~$Q$.
A maximum one of these sub-queries will be~$Q_{max}$.

The algorithm, called CertQR, is presented in Fig.~\ref{alg:basic}.
Breadth-first search simultaneously starts from each query entity $qe \in Q$ (lines~3--6).
The frontier is stored in a queue denoted by $qu$,
where each element is an ordered pair of entities $\langle e,sqe \rangle$,
consisting of an entity~$e \in E_G$ to process,
and a query entity $sqe \in Q$ starting from which $e$~is visited for the first time
(i.e., $sqe$~is the closest query entity to~$e$).
Visited entities are stored in a set denoted by $visited$.
Iteratively, for each entity~$e$ to process (line~9),
the OptWithCert algorithm in Fig.~\ref{alg:owc} finds
$Q_e$,
which is a maximum sub-query that $e$~is a certificate of the success of (line~10).
If $Q_e$~is larger than the current~$Q_{max}$, a substitution will be made to update~$Q_{max}$ (lines~11--13).
If $e$~is less than $\left\lceil{\frac{D}{2}}\right\rceil$~hops away from~$sqe$,
search will continue and expand the neighbors of~$e$, i.e.,~$\nb(e)$ (lines~14--21).
Finally, $Q_{max}$~is returned,
which is either a maximum successful sub-query of~$Q$ if it exists, or~$\emptyset$ (line~23).

OptWithCert in Fig.~\ref{alg:owc} finds~$Q_c$ ---
a maximum sub-query that a particular entity~$c$ is a certificate of the success of.
It returns~$\emptyset$ if $c$~is not a certificate of the success of any sub-query.
Let~$Q^1$ and~$Q^2$ be the sets of query entities that
are at most and are exactly $\left\lceil{\frac{D}{2}}\right\rceil$~hops away from~$c$, respectively (lines~1--2).
According to Theorem~\ref{THE:CERT}, $Q^1$~is exactly~$Q_c$ when $D$~is even (lines~4--5).
When $D$~is odd, we need to consider critical query entities, i.e.,~$Q^2$ (lines~6--11).
Specifically, for each neighbor of~$c$ denoted by $c' \in \nb(c)$,
let~$R_{c'}$ be the subset of critical query entities that
are $(\left\lceil{\frac{D}{2}}\right\rceil - 1)$~hops away from~$c'$ (lines~7--9).
These critical query entities (i.e.,~$R_{c'}$) and all the non-critical query entities (i.e.,~$Q^1 \setminus Q^2$) together
form a maximum sub-query that $\langle c,c' \rangle$~is a certificate of the success of.
$Q_c$~is a maximum one of these sub-queries over all $c' \in \nb(c)$ (line~10).

Two additional improvements are made in OptWithCert.
First, even if $D$~is odd, it will be unnecessary to look for~$c'$ and we will have $Q_c=Q^1$ if $|Q^2| \leq 1$ (lines~4--5),
because either there is no critical query entity (i.e.,~$|Q^2|=0$)
or there is only one critical query entity (i.e.,~$|Q^2|=1$)
so that there certainly exists $c' \in \nb(c)$ that satisfies $R_{c'}=Q^2$.
Second, because a maximum successful sub-query found by the algorithm is bounded by~$Q^1$,
the algorithm will be terminated early if $Q^1$~is not larger than~$Q_{known}$,
which is a known successful sub-query (line~3).
In the CertQR algorithm (line~10),
the current maximum successful sub-query is assigned to $Q_{known}$ when invoking OptWithCert.

\subsubsection{Algorithm Analysis}
The correctness of the CertQR algorithm is straightforward following Theorem~\ref{THE:CERT}.

The running time of CertQR mainly consists of:
\begin{itemize}
    \item $O(|E_G|+|A_G|)$ for breadth-first search, and
    \item the time for $O(|E_G|)$ times of invoking OptWithCert.
\end{itemize}
\noindent The running time of OptWithCert is dominated by distance calculation ($\dist$).
Let~$d$ be the running time for one $\dist$ call.
Altogether, the $O(|E_G|)$ times of invoking OptWithCert
use $O(|E_G| \cdot |Q|d)$ time when $D$~is even (lines~1--2),
and need additional $O(|A_G| \cdot |Q|d)$ time when $D$~is odd (lines~7--10).

Overall, the running time of CertQR is bounded by $O((|E_G|+|A_G|) \cdot |Q|d)$, which is polynomial.
It is more scalable than the intuitive exponential-time solution discussed in Section~\ref{sect:int}.

\textbf{Calculation of Distance.}
The running time of CertQR is proportional to~$d$.
When $G$~is large, online calculating $\dist$ is time-consuming (though still in polynomial time),
and materializing offline calculated distances between all pairs of entities is space-consuming.
To achieve a trade-off between time and space, we implement an off-the-shelf \emph{distance oracle}~\cite{oracle}.
This data structure, based on certain precomputed and materialized information,
allows reasonably fast distance calculation,
though not as fast as directly looking up materialized distances.
Therefore, $d$~can be practically regarded as a constant.
The size of a distance oracle is considerably smaller than
the size of materializing distances between all pairs of entities.
The reader is referred to~\cite{oracle} for a detailed implementation,
and to~\cite{oraclesurvey} for a survey of related techniques.

For the completeness of this article, we briefly describe our implementation of distance oracle according to~\cite{oracle}. For each vertex, its distances to a small set of landmark vertices are precomputed and materialized. For any two vertices, they have at least one common landmark vertex that is on a shortest path between them. Therefore, the distance between two vertices can be quickly calculated based on their materialized distances to their common landmark vertices. Minimizing the number of landmark vertices and materialized distances is the focus of~\cite{oracle}, which we will not detail here.

\subsubsection{Running Example}
Under $D=4$,
for the relationship query $\{\texttt{Alice},\texttt{Bob},\\
\texttt{Dan}, \texttt{Gary}\}$
submitted to~$G$ in Fig.~\ref{fig:erg},
every entity in the graph is at most 2~hops away from some query entity.
So all the 11~entities are visited in search and processed by OptWithCert.
As a result,
\texttt{ISWC2019} is a certificate of the success of $\{\texttt{Alice}, \texttt{Bob}, \texttt{Dan}\}$,
and \texttt{Paper02} is a certificate of the success of $\{\texttt{Bob}, \texttt{Dan}, \texttt{Gary}\}$.
The two sub-queries are equally large.
Either of them will be returned as~$Q_{max}$.
\section{\texorpdfstring{C\MakeLowercase{ert}QR+}{CertQR+}: Improved Algorithm}\label{sect:opt}
CertQR exhaustively considers all the entities
that are at most $\left\lceil{\frac{D}{2}}\right\rceil$ hops away from each query entity as candidate certificates.
To improve the performance,
we propose to consider fewer entities but ensure that
unvisited entities cannot be a certificate of the success of a sub-query that is larger than the one to return.
To achieve it, we devise a \emph{best-first search} algorithm called CertQR+.

We will present the CertQR+ algorithm in Section~\ref{sect:optalg},
and introduce two fine-grained heuristics to further improve the performance in Section~\ref{sect:heu}.

\begin{figure}[!t]
\begin{algorithmic}[1]
  \REQUIRE{An entity-relation graph~$G$, a diameter constraint~$D$, and a relationship query~$Q$.}
  \ENSURE{$Q_{max}$ --- a maximum successful sub-query of~$Q$.}
  \STATE $pq \leftarrow$ empty priority queue
  \FORALL{$qe \in Q$}
    \STATE $pr \leftarrow \pr(qe|qe)$
    \STATE $pq$.InsertWithPriority($\langle qe,qe,pr \rangle$)
    \STATE $visited_{qe} \leftarrow \{qe\}$
  \ENDFOR
  \STATE $Q_{max} \leftarrow \emptyset$
  \STATE $checked \leftarrow \emptyset$
  \WHILE{$pq$ is not empty}
    \STATE $\langle e,sqe,pr \rangle \leftarrow pq$.PullHighestPriorityElement()
    \IF{$\left\lfloor{pr}\right\rfloor \leq |Q_{max}|$ \OR $\left\lfloor{pr}\right\rfloor \leq 1$}
      \STATE break the while loop
    \ELSE
      \IF{$e \notin checked$}
        \STATE $Q_e \leftarrow$ OptWithCert($G$, $D$, $Q$, $e$, $Q_{max}$)
        \STATE $checked \leftarrow checked \cup \{e\}$
        \IF{$|Q_e| > |Q_{max}|$}
          \STATE $Q_{max} \leftarrow Q_e$
        \ENDIF
      \ENDIF
      \IF{$\dist(e,sqe)<\left\lceil{\frac{D}{2}}\right\rceil$ \AND $\left\lfloor{pr}\right\rfloor>|Q_{max}|$}
        \FORALL{$e' \in \nb(e)$}
          \IF{$e' \notin visited_{sqe}$ \AND $\dist(e',sqe)=\dist(e,sqe)+1$}
            \STATE $pr' \leftarrow \pr(e'|sqe)$
            \STATE $pq$.InsertWithPriority($\langle e',sqe,pr' \rangle$)
            \STATE $visited_{sqe} \leftarrow visited_{sqe} \cup \{e'\}$
          \ENDIF
        \ENDFOR
      \ENDIF
    \ENDIF
  \ENDWHILE
  \RETURN $Q_{max}$
\end{algorithmic}
\caption{CertQR+: an improved version of CertQR.}
\label{alg:opt}
\end{figure}

\subsection{Improved Algorithm}\label{sect:optalg}
Our improved algorithm performs best-first search.

\subsubsection{Algorithm Design}
The new algorithm, called CertQR+, is presented in Fig.~\ref{alg:opt}.
Similar to CertQR, search simultaneously starts from each query entity.
However, different from CertQR which visits each entity in~$G$ at most once,
CertQR+ runs $|Q|$~\emph{independent searches}.
Each search is only focused on sub-queries that contain the start query entity of that search.
An entity in the graph may be visited up to $|Q|$~times in $|Q|$~searches.
This seems slower than CertQR,
but we will show that these independent searches can be terminated early.

Specifically, each search starts from a distinct query entity $qe \in Q$ (lines~2--6).
The frontier, which is shared by all the $|Q|$~searches, is stored in a priority queue denoted by $pq$,
where each element is an ordered entity-entity-priority triple $\langle e,sqe,pr \rangle$,
consisting of an entity $e \in E_G$ to process,
a query entity $sqe \in Q$ starting from which $e$~is visited
(i.e., $sqe$~identifies a search),
and a priority for~$e$ denoted by~$pr$.
Entities visited in a search starting from~$qe$ are stored in a set denoted by $visited_{qe}$.

Different from CertQR which uses a first-in-first-out queue and performs breadth-first search,
CertQR+ uses a priority queue~$pq$ and performs best-first search.
In each iteration, it pulls out a triple $\langle e,sqe,pr \rangle$
that has the highest priority~$pr$ (line~10).
We define~$pr$ to be:
an \emph{estimate} of the number of query entities in a maximum possible sub-query
which \emph{$e$~or its descendant in the search starting from~$sqe$} can be a certificate of the success of.
If the highest priority in~$pq$ is not larger than~$|Q_{max}|$ or~1
where $Q_{max}$ denotes the current maximum successful sub-query,
the algorithm can be terminated and $Q_{max}$~will be returned (lines~11-12).
We will later elaborate the computation of priority.

Although $e$~may be visited in different independent searches,
it will be processed by OptWithCert at most once (lines~14--16).
Entities that have been processed by OptWithCert are stored in a set denoted by $checked$,
which is shared by all the $|Q|$~searches.
If OptWithCert finds a larger successful sub-query, $Q_{max}$~will be updated (lines~17--19).

The search starting from~$sqe$ will expand the neighbors of~$e$ if:
$e$~is less than $\left\lceil{\frac{D}{2}}\right\rceil$~hops away from~$sqe$,
and $pr$~is larger than $Q_{max}$ (lines~21--29).
A neighbor $e' \in \nb(e)$ will be expanded only if $\dist(e',sqe)=\dist(e,sqe)+1$ (line~23),
i.e., $e'$~is reached via a shortest path from~$sqe$.
This additional requirement is not necessary but may reduce the search space, which we will discuss later.

Finally, $Q_{max}$~is returned,
which is either a maximum successful sub-query of~$Q$ if it exists, or~$\emptyset$ (line~32).

\textbf{Computation of Priority.}
The priority for an entity~$e$, i.e.,~$pr$,
is computed by a function $\pr(e|sqe)$ which depends on~$sqe$:
\begin{equation}\label{eq:prip}
  \pr(e|sqe) = |\est(e|sqe)| \,,
\end{equation}
\noindent where $\est(e|sqe)$ is an estimated set of query entities
in a maximum possible sub-query
which $e$~or its descendant in the search starting from~$sqe$ can be a certificate of the success of.
Our estimation uses distances between entities:
\begin{equation}\label{eq:est}
\begin{split}
  \est(e|sqe) = \{sqe\} \cup \{ & qe \in (Q \setminus \{sqe\}) : \\ 
  & \dist(e,sqe) + \dist(e,qe) \leq D \} \,.
\end{split}
\end{equation}
\noindent We will prove that it guarantees the optimality of~$Q_{max}$ when the algorithm is terminated.

In Section~\ref{sect:heu}, we will consider more effective implementation of $\pr(e|sqe)$ based on fine-grained heuristics.

\subsubsection{Algorithm Analysis}
Theorem~\ref{THE:OPT} proves the correctness of CertQR+.
\begin{theorem}\label{THE:OPT}
CertQR+ returns a maximum successful sub-query if it exists.
\end{theorem}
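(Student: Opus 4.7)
The plan is to reduce correctness of CertQR+ to three facts: (i) $\pr(e|sqe)$ is an \emph{admissible} (never-underestimating) upper bound on the size of any sub-query that $e$ or its descendants in the search from $sqe$ could be a certificate of the success of; (ii) the best-first discipline guarantees that once the highest-priority element has priority $\leq |Q_{max}|$, no unexplored candidate can yield a better sub-query; (iii) OptWithCert is already known by Theorem~\ref{THE:CERT} to return a maximum sub-query certified by a given entity. These three together imply that the returned $Q_{max}$ is optimum whenever some successful sub-query exists.

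First I would prove the admissibility lemma: for any entity $e$, any start entity $sqe \in Q$, and any successful sub-query $Q' \ni sqe$ whose certificate $c$ lies on a shortest $sqe$-to-$c$ path passing through $e$, one has $Q' \subseteq \est(e|sqe)$, hence $|Q'| \leq \pr(e|sqe)$. The argument is a triangle-inequality calculation. Because $e$ lies on a shortest path from $sqe$ to $c$, $\dist(e,c) = \dist(sqe,c) - \dist(sqe,e)$; because $c$ is a certificate of $Q'$, $\dist(c,qe) \leq \left\lceil \frac{D}{2} \right\rceil$ for every $qe \in Q'$. Combining gives $\dist(e,sqe) + \dist(e,qe) \leq 2\left\lceil \frac{D}{2} \right\rceil$, which is $\leq D$ when $D$ is even. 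The odd case needs Theorem~\ref{THE:CERT}'s second condition, which forces critical query entities to share a common neighbor $c'$ of $c$ and therefore reduces the effective pairwise bound; this is the main obstacle in the formal write-up.

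Next I would establish the best-first invariant. Every entity $e'$ still reachable in the search from $sqe$ has been either already enqueued (with its admissible priority) or can only be reached through an entity currently in $pq$. Along a shortest path rooted at $sqe$, the set $\est(\cdot|sqe)$ is monotonically non-increasing in hop distance, because increasing $\dist(e,sqe)$ can only shrink the set of $qe$ satisfying $\dist(e,sqe) + \dist(e,qe) \leq D$. Hence any priority reached later in the search from $sqe$ is at most the priority currently enqueued. By the priority-queue property, when the algorithm pulls a triple whose priority has $\lfloor pr \rfloor \leq |Q_{max}|$, every remaining triple, and every descendant thereof, also has admissible upper bound $\leq |Q_{max}|$, so by the lemma no unprocessed entity can certify a strictly larger sub-query; early termination on line~11 is safe.

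Finally I would combine: suppose $Q^*$ is a true maximum successful sub-query with certificate $c^*$. Pick any $sqe \in Q^*$. Because $\dist(sqe,c^*) \leq \left\lceil \frac{D}{2} \right\rceil$, there is a shortest $sqe$-to-$c^*$ path, every prefix of which is enqueued with priority at least $|Q^*|$ by the admissibility lemma applied to that prefix and to $Q^*$. By the best-first invariant, the search from $sqe$ does not terminate before pulling an element whose priority is $\geq |Q^*| > |Q_{max}|$; in particular, the expansion gate on line~21 stays open along this prefix until $c^*$ itself is processed by OptWithCert, which by Theorem~\ref{THE:CERT} (fact (iii)) returns a sub-query of size $\geq |Q^*|$, forcing $|Q_{max}| \geq |Q^*|$ at termination. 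Since we already showed $Q_{max}$ can never exceed the true optimum, equality follows.
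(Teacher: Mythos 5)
Your overall architecture matches the paper's: an admissibility claim for $\pr(\cdot|sqe)$ along a path from some $sqe$ to the certificate, plus the observation that the best-first termination test and the expansion gate cannot cut off a path whose every entity has priority exceeding $|Q_{max}|$. The even-$D$ case of your admissibility lemma is correct and is essentially the paper's computation. But the odd-$D$ case, which you explicitly defer as ``the main obstacle,'' is a genuine gap, and the fix you gesture at (critical query entities sharing a neighbor $c'$, hence a reduced pairwise bound) does not close it. The failure point is the certificate $c$ itself: if you root the key path at a \emph{critical} query entity $sqe$ (i.e., $\dist(sqe,c)=\left\lceil\frac{D}{2}\right\rceil$) and some other $qe\in Q_{opt}$ is also critical, then $\dist(c,sqe)+\dist(c,qe)=2\left\lceil\frac{D}{2}\right\rceil=D+1>D$, so $qe\notin\est(c|sqe)$ and $\pr(c|sqe)<|Q_{opt}|$. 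Then $c$ may sit in the queue with priority $\leq|Q_{max}|$ and the algorithm terminates (line~11) before OptWithCert ever sees it. The shared neighbor $c'$ helps bound priorities up to $c'$, but not at $c$, which is exactly the entity that must be processed.

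The missing idea is to choose $sqe$ to be a \emph{non-critical} query entity of $Q_{opt}$, i.e., one with $\dist(sqe,c)\leq\left\lfloor\frac{D}{2}\right\rfloor$, and to prove such an entity exists. The paper gets this from the minimality of the SA $x$ constructed in the sufficiency half of Theorem~\ref{THE:CERT}: if every query entity of $Q_{opt}$ were critical, every branch of $x$ would reach $c$ through $c'$, so $c$ would be a non-query leaf and could be deleted together with the arc $cc'$, contradicting minimality. With a non-critical $sqe$, the bound becomes uniform along the whole path including its endpoint $c$: for every entity $e$ on the path and every $sqe'\in Q_{opt}\setminus\{sqe\}$, $\dist(e,sqe)+\dist(e,sqe')\leq\left\lfloor\frac{D}{2}\right\rfloor+\left\lceil\frac{D}{2}\right\rceil=D$, hence $\pr(e|sqe)\geq|Q_{opt}|$. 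Without this existence argument your final paragraph's claim that ``every prefix \ldots is enqueued with priority at least $|Q^*|$'' is unsupported precisely where it matters. (Your monotonicity observation for $\est(\cdot|sqe)$ along shortest paths is correct by the triangle inequality, but it is not needed once the key path is chosen this way.)
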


The key idea of our proof is to show
the existence of a path between some query entity in an optimum solution and a certificate of its success such that:
for every entity in the path, its priority is not smaller than the number of query entities in that optimum solution.
Consequently, as search starts from each query entity,
it is impossible that the algorithm produces a sub-optimum solution without visiting that certificate via the path.

\begin{proof}
We prove by contradiction.

Assume CertQR+ returns a sub-optimum solution~$Q_{max}$
which is smaller than an optimum solution denoted by~$Q_{opt}$.
Let $c \in E_G$ be a certificate of the success of~$Q_{opt}$.
Before CertQR+ is terminated,
$c$~has never been processed by OptWithCert because otherwise, $Q_{opt}$ instead of~$Q_{max}$ would be returned.
However, we can prove the existence of a particular path between~$c$ and some query entity $sqe \in Q_{opt}$, called a \emph{key path}, such that:
for every entity~$e$ in this path,
$\pr(e|sqe) \geq |Q_{opt}| > |Q_{max}|$ holds.
With this key path, the algorithm is impossible to return~$Q_{max}$ without visiting~$c$ via this path and finding~$Q_{opt}$ by OptWithCert, leading to a contradiction.

To show the existence of such a key path,
consider an SA~$x$ constructed according to the proof of sufficiency of Theorem~\ref{THE:CERT},
which consists of shortest paths between~$c$ and each query entity in~$Q_{opt}$.
Now we prove that:
(i)~at least one of these shortest paths is not longer than $\left\lfloor{\frac{D}{2}}\right\rfloor$, and (ii)~this shortest path is a key path.

(i)~When $D$~is even,
all of those shortest paths are not longer than $\left\lceil{\frac{D}{2}}\right\rceil = \left\lfloor{\frac{D}{2}}\right\rfloor$.
When $D$~is odd,
assume on the contrary that
all of them are longer than $\left\lfloor{\frac{D}{2}}\right\rfloor$.
According to Theorem~\ref{THE:CERT},
all the query entities are critical query entities,
and hence all of those shortest paths pass through~$c'$ which is a neighbor of~$c$.
That contradicts the minimality of~$x$ in Definition~\ref{def:sa} because
the vertex~$c$ and the arc between~$c$ and~$c'$ can be removed from~$x$ to obtain a proper subgraph of~$x$ that is a well-defined SA.

(ii)~Let~$p$ be a path not longer than $\left\lfloor{\frac{D}{2}}\right\rfloor$ proved in~(i),
which connects~$sqe \in Q_{opt}$ and~$c$.
For every entity~$e$ in~$p$ and every query entity $sqe' \in (Q_{opt} \setminus \{sqe\})$,
$\dist(e,sqe) + \dist(e,sqe')$ is not larger than
the sum of the length of~$p$ and the length of a shortest path between~$c$ and~$sqe'$,
which in turn is not larger than $\left\lfloor{\frac{D}{2}}\right\rfloor + \left\lceil{\frac{D}{2}}\right\rceil = D$.
Therefore, $sqe' \in \est(e|sqe)$
and we have $\pr(e|sqe) = |\est(e|sqe)| \geq |Q_{opt}|$, so $p$~is a key path.
\end{proof}

In order to find an optimum solution,
it would be sufficient to visit a key path which is a shortest path.
It explains why CertQR+ only searches along shortest paths (line~23).

The running time of CertQR+ mainly consists of:
\begin{itemize}
    \item $O(|Q| \cdot (|E_G|+|A_G|))$ for $|Q|$~searches,
    \item $O((|E_G|+|A_G|) \cdot |Q|d)$ for $O(|E_G|)$~times of invoking OptWithCert,
    being the same as that in CertQR,
    \item $O(|Q|^2d \cdot |E_G|)$ for $O(|Q| \cdot |E_G|)$~times of priority computation, and
    \item $O(|Q| \cdot |E_G| \log (|Q| \cdot |E_G|))$ for $O(|Q| \cdot |E_G|)$~pairs of insert-pull operations using a Fibonacci heap for priority queue.
\end{itemize}

The overall worst-case running time of CertQR+ can be longer than that of CertQR.
However, CertQR+ could be faster in practice
mainly because searches in CertQR+ can be terminated early.

\subsubsection{Running Example}
Under $D=4$, for the relationship query $Q=\{\texttt{Alice},\\ \texttt{Bob},\texttt{Dan},\texttt{Gary}\}$
submitted to~$G$ in Fig.~\ref{fig:erg},
CertQR+ initially inserts four triples into~$pq$:
\begin{equation*}
\begin{split}
    t_1 & = \langle \texttt{Alice},\texttt{Alice},3 \rangle \,,\\
    t_2 & = \langle \texttt{Bob},\texttt{Bob},4 \rangle \,,\\
    t_3 & = \langle \texttt{Dan},\texttt{Dan},4 \rangle \,,\\
    t_4 & = \langle \texttt{Gary},\texttt{Gary},3 \rangle \,.
\end{split}
\end{equation*}

In~$pq$, $t_2$ (or~$t_3$) has the highest priority and is pulled out first.
However, OptWithCert finds that \texttt{Bob} is not a certificate of the success of any sub-query.
\texttt{Bob}'s neighbors are expanded, and a new triple is inserted into~$pq$:
\begin{equation*}
    t_5 = \langle \texttt{Paper02},\texttt{Bob},4 \rangle \,.
\end{equation*}

$t_5$ (or~$t_3$) is pulled out of~$pq$.
OptWithCert finds that \texttt{Paper02} is a certificate of the success of $\{\texttt{Bob},\texttt{Dan},\texttt{Gary}\}$,
which is assigned to~$Q_{max}$.
\texttt{Paper02}'s neighbors are then expanded, and three new triples are inserted into~$pq$:
\begin{equation*}
\begin{split}
    t_6 & = \langle \texttt{ISWC2019},\texttt{Bob},3 \rangle \,,\\
    t_7 & = \langle \texttt{Erin},\texttt{Bob},1 \rangle \,,\\
    t_8 & = \langle \texttt{Frank},\texttt{Bob},2 \rangle \,.
\end{split}
\end{equation*}

$t_3$~is pulled out of~$pq$.
OptWithCert finds that \texttt{Dan} is not a certificate of the success of any sub-query.
\texttt{Dan}'s neighbors are expanded, and a new triple is inserted into~$pq$:
\begin{equation*}
    t_9 = \langle \texttt{ISWC2019},\texttt{Dan},4 \rangle \,.\\
\end{equation*}
\noindent Note that \texttt{ISWC2019} has been visited twice in two different searches:
one starting from \texttt{Bob} (i.e.~$t_6$) and the other from \texttt{Dan} (i.e.,~$t_9$).

$t_9$~is pulled out of~$pq$.
OptWithCert finds that \texttt{ISWC2019} is not a certificate of the success of any sub-query
that is larger than the current $Q_{max} = \{\texttt{Bob},\texttt{Dan},\texttt{Gary}\}$.
\texttt{ISWC2019}'s neighbors are expanded, and three new triples are inserted into~$pq$:
\begin{equation*}
\begin{split}
    t_{10} & = \langle \texttt{Paper01},\texttt{Dan},2 \rangle \,,\\
    t_{11} & = \langle \texttt{Paper02},\texttt{Dan},3 \rangle \,,\\
    t_{12} & = \langle \texttt{Paper03},\texttt{Dan},1 \rangle \,.
\end{split}
\end{equation*}

Now the highest priority in~$pq$ is~3,
which is not larger than $|Q_{max}|=3$.
The algorithm is terminated and returns $Q_{max} = \{\texttt{Bob},\texttt{Dan},\texttt{Gary}\}$.

For this example, CertQR+ is faster than CertQR.
First, in CertQR, 11~entities are processed by OptWithCert,
whereas only 4~entities are processed by OptWithCert in CertQR+.
Second, \texttt{Carol} is visited in CertQR but not in CertQR+.
CertQR+ effectively prunes the search space of CertQR.

\subsection{Fine-Grained Heuristics}\label{sect:heu}
We introduce two fine-grained heuristics that may guide CertQR+ to find an optimum solution earlier.
In Eq.~(\ref{eq:prip}), $\pr(e|sqe)$ is an integer.
We will heuristically define its fractional part.
That will not affect the correctness of CertQR+,
which only relies on the integer part of priority (lines~11 and~21).
On the other hand, ties in the priority queue (according to the integer part) will be broken not arbitrarily but heuristically (according to the fractional part).
This fine-grained ordering may allow CertQR+ to be terminated earlier.

\subsubsection{Degree-enhanced Priority}
This heuristic considers the \emph{degree} of~$e$, namely the number of arcs in~$G$ that are incident from/to~$e$, denoted by $\degree(e)$:
\begin{equation}
  \degree(e) = |\{a \in A_G : \tail(a)=e \text{ or } \head(a)=e\}| \,.
\end{equation}
\noindent The idea is:
when multiple entities have the same value of $|\est|$,
we will give priority to the one having the smallest degree,
because expanding the neighbors of an entity having a large degree and computing their priorities is time-consuming (lines~22--28).
Therefore, we define a variant of $\pr$ as follows:
\begin{equation}\label{eq:dg}
  \pr_{\text{dg}}(e|sqe) = |\est(e|sqe)| + \frac{1}{2+\degree(e)} \,,
\end{equation}
\noindent where the fractional part depends on~$\degree$.

\subsubsection{Distance-enhanced Priority}
This heuristic considers the distance between~$e$ and possible certificates of the success of $\est(e|sqe)$.
According to Theorem~\ref{THE:CERT},
the distance between such a certificate (if it exists) and each query entity $qe \in \est(e|sqe)$
is not longer than $\left\lceil{\frac{D}{2}}\right\rceil$.
Therefore, the number of hops from~$e$ to a possible certificate is bounded below by:
\begin{equation}
  \hops(e|sqe) = \max_{qe \in \est(e|sqe)}{\max{\{0, \dist(e,qe) - \left\lceil{\frac{D}{2}}\right\rceil\}}} \,.
\end{equation}
\noindent The idea is:
when multiple entities have the same value of $|\est|$,
we will give priority to the closest one to a possible certificate,
which may lead to a better solution earlier.
Therefore, we define a variant of $\pr$ as follows:
\begin{equation}\label{eq:ds}
  \pr_{\text{ds}}(e|sqe) = |\est(e|sqe)| + \frac{1}{2^{1+\hops(e|sqe)}} \,,
\end{equation}
\noindent where the fractional part depends on~$\hops$.

\subsubsection{Combined Priority}
The above two heuristics can be combined as follows:
\begin{equation}\label{eq:dgs}
  \pr_{\text{dgs}}(e|sqe) = |\est(e|sqe)| + \frac{1}{(2+\degree(e))^{1+\hops(e|sqe)}} \,.
\end{equation}
\section{Experiments}\label{sect:eval}
Our code and resources for experiments are available\footnote{\url{https://github.com/nju-websoft/CertQR}}.

At the beginning of the article, we mentioned two methods for query relaxation:
relaxing the compactness constraint and relaxing the query entities.
In the first experiment, we compared the quality of their output,
to show the better practicability of the latter method.
In the second experiment, we tested the running time of our proposed algorithms compared with baselines,
to demonstrate their efficiency.
All the experiments were performed on an Intel Xeon E7-4820 (2.00 GHz) with 128GB memory for Java.

\subsection{Datasets}
Three popular RDF datasets were used in our experiments: DBpedia, LinkedMDB, and Mondial.
They covered the datasets used in previous efforts to evaluate SA search~\cite{tkde17} as well as keyword search on graphs~\cite{emp}.

\textbf{DBpedia} is a large encyclopedic RDF dataset derived from Wikipedia,
describing people, places, creative works, organizations, species, etc.
Using its 2015-10 version\footnote{\url{https://wiki.dbpedia.org/Downloads2015-10}},
we obtained an entity-relation graph by importing relations between entities from two dump files:
\emph{Mappingbased Objects} and \emph{Person data}.

\textbf{LinkedMDB} is a large RDF dataset
describing movies and related concepts, e.g., actors, directors.
We obtained an entity-relation graph by importing relations between entities
from its latest dump file\footnote{\url{http://www.cs.toronto.edu/~oktie/linkedmdb/linkedmdb-latest-dump.zip}}.

\textbf{Mondial} is a small geographical database.
We obtained an entity-relation graph
by importing relations between entities from its RDF graph
version\footnote{\url{http://www.dbis.informatik.uni-goettingen.de/Mondial/Mondial-RDF/mondial.rdf}}.

\begin{table}[!t]
  \caption{Size of Entity-Relation Graphs}
  \label{tab:ds}
  \centering
  \begin{tabular}{lrr}
    \hline
    Dataset & Number of vertices & Number of arcs \\
    \hline
    DBpedia & 5,356,286 & 17,494,749 \\
    LinkedMDB & 1,326,784 & 2,132,796 \\
    Mondial & 8,478 & 34,868 \\
    \hline
\end{tabular}
\end{table}

Note that \texttt{rdf:type} and literals were not included in the entity-relation graphs because our focus was on the relations between instance-level entities. Table~\ref{tab:ds} presents the size of each graph.

\subsection{Queries}
For each entity-relation graph, we constructed two types of queries having complementary characteristics.

A \textbf{simulated query} consisted of related entities, which were likely to be pairwise close in a graph.
We adopted the process described in~\cite{tkde17} to construct such queries
that well simulated real users' information needs.

\begin{figure*}[!t]
\centering
\subfloat[DBpedia (simulated queries)]{\includegraphics[width=0.64\columnwidth]{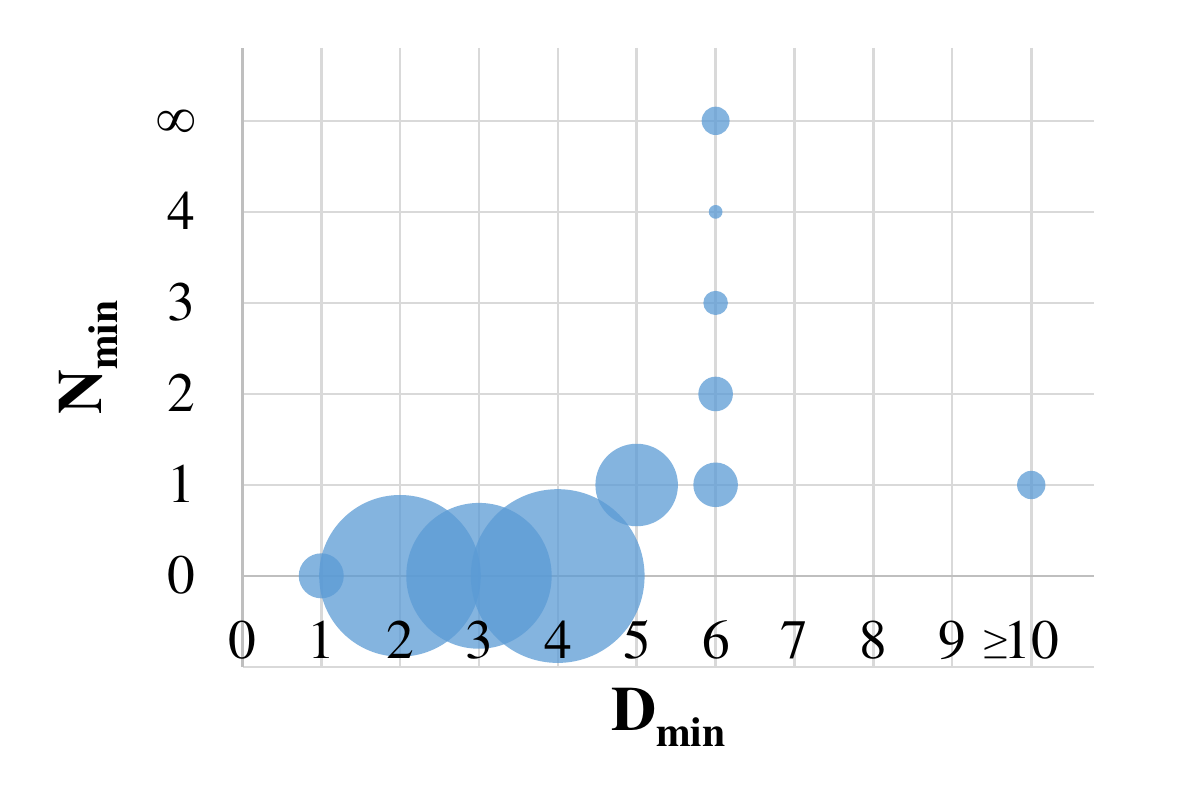}\label{fig:dbparn}}
\hfill
\subfloat[LinkedMDB (simulated queries)]{\includegraphics[width=0.64\columnwidth]{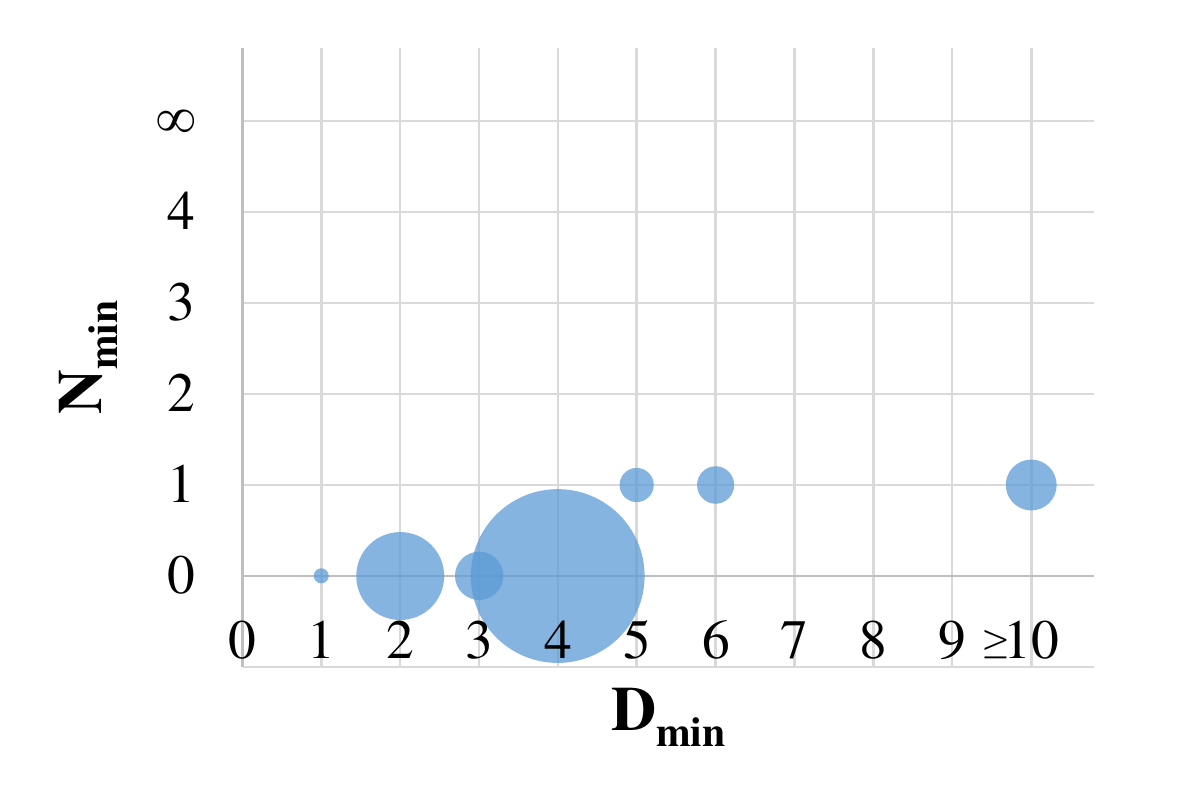}\label{fig:mdbarn}}
\hfill
\subfloat[Mondial (simulated queries)]{\includegraphics[width=0.64\columnwidth]{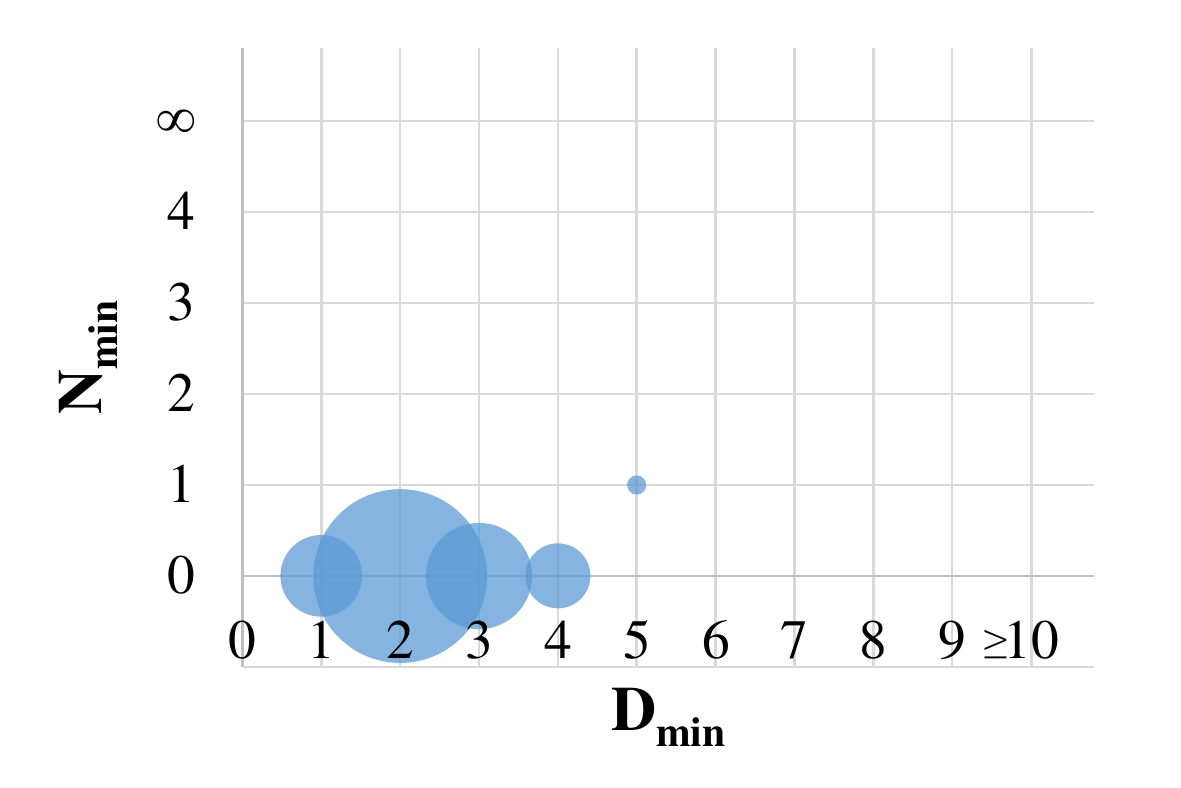}\label{fig:mondarn}}
%
\caption{Distribution of result quality after query relaxation.
The area of a bubble represents the number of queries
having a particular value of $\langle D_{min}, N_{min} \rangle$
represented by the center of the bubble.}
\label{fig:rn}
\end{figure*}

Specifically, for DBpedia, we identified 250~entities
that were mentioned in the training set of the QALD-5 evaluation campaign\footnote{\url{https://qald.sebastianwalter.org/index.php?q=5}},
called \emph{seed entities}.
For each seed entity (e.g., Michael Jordan),
we submitted its name as a keyword query to the Google search engine,
which might trigger Google's Knowledge Graph to return a set of entities
that ``people also search for'' (e.g., Kobe Bryant).
We identified their corresponding entities in DBpedia, if they existed,
called \emph{related entities}.
For 92~seed entities, at least five related entities could be found.
For each of those seed entities and each number of query entities~$n$ in the range of 2--6,
we generated a simulated query consisting of the seed entity
and $n-1$~entities randomly selected from its related entities.

For LinkedMDB and Mondial, the processes were similar.
For LinkedMDB, seed entities were 39~top rated movies in IMDb.
Related entities were movies that
``people who liked (a seed entity) also liked'' recommended by IMDb.
For Mondial, seed entities were 57~entities mentioned in the keyword queries used in a previous evaluation effort\footnote{\url{https://doi.org/10.18130/V3/KEVCF8}}.
Related entities were obtained in the same way we did for DBpedia.

There were $(92+39+57) \cdot 5 = 940$ simulated queries.

A \textbf{random query} consisted of random entities, which were not likely to be pairwise close in a graph.
Complementary to simulated queries, random queries were more likely to fail and need to be relaxed.
For each entity-relation graph and each number of query entities~$n$ in the range of 2--6,
we constructed 100~random queries, each consisting of $n$~randomly selected entities.
There were $100 \cdot 5 \cdot 3 = 1500$ random queries.

\subsection{Experiment on Quality}\label{sect:expq}
In Section~\ref{sect:intro}, we mentioned two methods for query relaxation.
One straightforward approach was to relax the compactness constraint.
The other, as the focus of this article, was to relax the query entities.
In this experiment, we compared the quality of their output to analyze their practicability.
Therefore, we only used simulated queries for this experiment.

\subsubsection{Experiment Design}
We measured quality from two perspectives: compactness and completeness.

\textbf{Compactness.} For each query, we found the minimum diameter constraint under which the query was successful, denoted by~$D_{min}$.
It characterized the \emph{compactness of search results after relaxing the diameter constraint}.
It also indicated the practicability of this approach,
which would be impracticable if $D_{min}$~were large
so that existing algorithms for SA search could not scale well.

\textbf{Completeness.} On the other hand, for each query under $D=4$ which was a typical setting used in the literature~\cite{iswc16,tkde17},
we computed the smallest number of query entities to remove in order to obtain a successful sub-query if it existed, denoted by~$N_{min}$;
otherwise, we defined $N_{min}=\infty$.
It characterized the \emph{completeness of search results after relaxing the query entities},
in terms of the number of missing query entities.

\subsubsection{Experiment Results and Analysis}
Figure~\ref{fig:rn} presents the distribution of~$D_{min}$ and~$N_{min}$.
The area of a bubble represents the number of queries having a particular value of $\langle D_{min}, N_{min} \rangle$
represented by the center of the bubble.

Among all the simulated queries, 87\%~on DBpedia (Fig.~\ref{fig:dbparn}), 89\%~on LinkedMDB (Fig.~\ref{fig:mdbarn}), and 99\%~on Mondial (Fig.~\ref{fig:mondarn}) were successful under $D=4$.
Query relaxation was needed not for them but for the rest of queries that failed,
whose proportions were still considerable on large entity-relation graphs like DBpedia and LinkedMDB.

For 82\%~of the failing queries on DBpedia and 100\%~of the failing queries on LinkedMDB and Mondial,
removing only one query entity by our approach could result in a successful sub-query ($N_{min}=1$).
The loss of result completeness, i.e., the number of missing query entities, was managed at the lowest level.

By contrast, if we chose the alternative approach of relaxing the compactness constraint,
by lifting the diameter constraint (i.e.,~$D$) from~4 to~5,
only 55\%~of the failing queries on DBpedia, 23\%~on LinkedMDB, and 100\%~on Mondial
could turn into successful queries.
In fact, even $D_{min} \geq 10$ was frequently observed when some queries entities were distantly connected or even disconnected in an entity-relation graph.
Therefore, the downside of this alternative approach included:
(a)~the loss of result compactness,
(b)~the incapability to handle disconnected query entities,
and more importantly, (c)~the performance issue considering that
existing techniques for SA search could not scale to large graphs unless $D$~was very small~\cite{iswc16,emp}.
For example, the search algorithm in~\cite{iswc16} could only scale up to~$D=4$ on DBpedia.

To conclude, \emph{relaxing the compactness constraint could not fundamentally solve the problem of query failure.
Relaxing the query entities was a more practicable approach,
which preserved result compactness and minimized result incompleteness.}

\begin{figure}[!t]
\centering
\includegraphics[width=0.95\columnwidth]{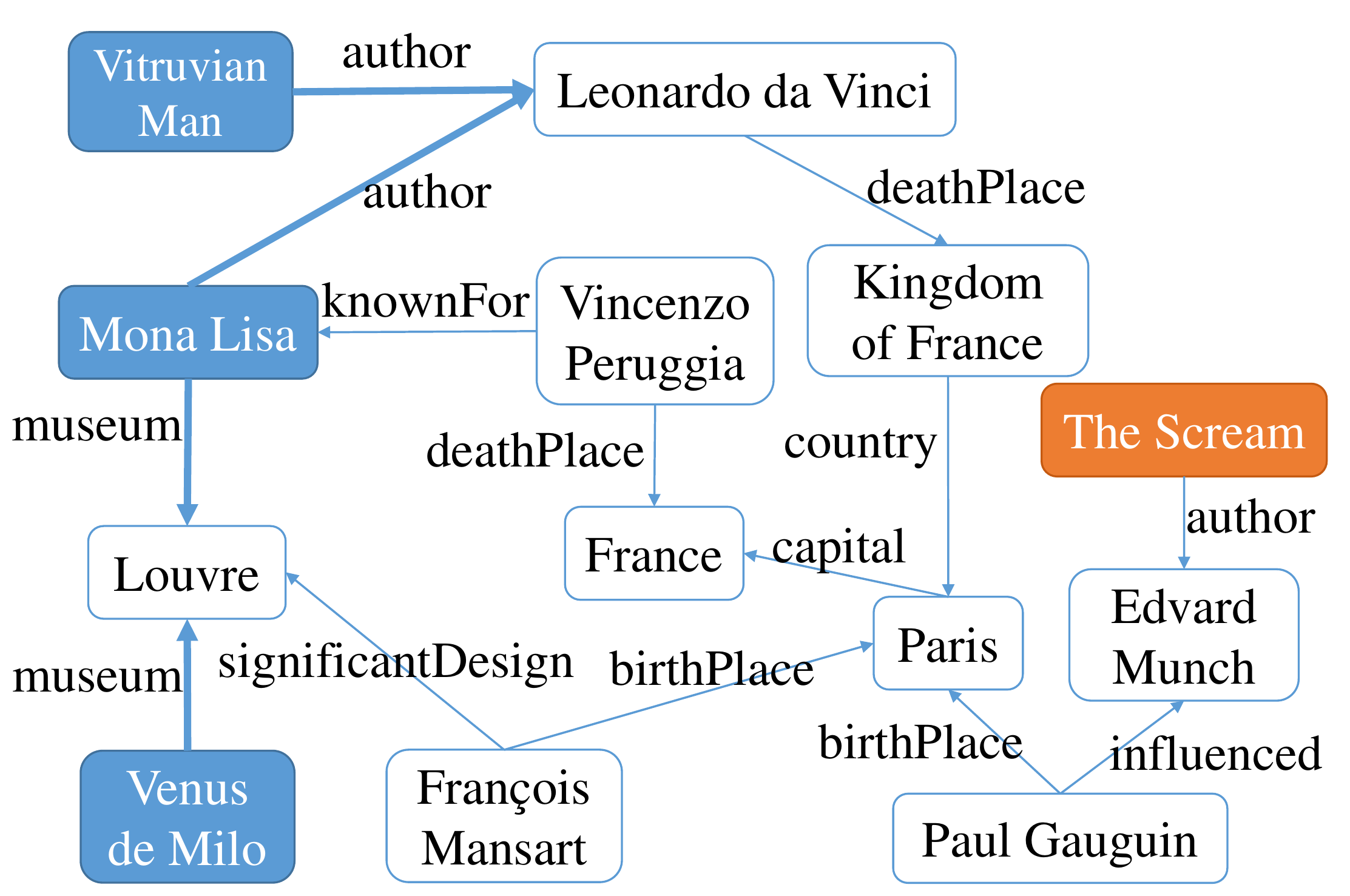}
\caption{A snippet of DBpedia.
Under $D=4$, the query $\{\texttt{Vitruvian Man}, ~\texttt{Mona Lisa}, ~\texttt{Venus de Milo}, ~\texttt{The Scream}\}$
is relaxed to $\{\texttt{Vitruvian Man}, ~\texttt{Mona Lisa}, ~\texttt{Venus de Milo}\}$,
which yields an SA comprising the four bold arcs.}\label{fig:eg-art}
\end{figure}

\begin{table*}[!t]
\caption{Proportion of Timeout Runs on BSL}
\label{tab:tobsl}
\centering
\begin{tabular}{llrrrrr}
    \hline
    Dataset & Query & $n=2$ & $n=3$ & $n=4$ & $n=5$ & $n=6$ \\
    \hline
    DBpedia & Simulated & 2.17\% & 3.80\% & 4.89\% & 10.87\% & 19.29\% \\
    & Random & 0 & 0 & 2.50\% & 12.50\% & 12.00\% \\
    LinkedMDB & Simulated & 0 & 0 & 0.64\% & 3.85\% & 10.90\% \\
    & Random & 0 & 0 & 0 & 0 & 0 \\
    Mondial & Simulated & 0 & 6.14\% & 13.60\% & 36.84\% & 40.79\% \\
    & Random & 0 & 0 & 2.50\% & 14.75\% & 20.50\% \\
    \hline
\end{tabular}
\end{table*}

\begin{table*}[!t]
\caption{Proportion of Timeout Runs on BSL+}
\label{tab:tobslp}
\centering
\begin{tabular}{llrrrrr}
    \hline
    Dataset & Query & $n=2$ & $n=3$ & $n=4$ & $n=5$ & $n=6$ \\
    \hline
    DBpedia & Simulated & 0.54\% & 0.54\% & 3.80\% & 9.78\% & 16.30\% \\
    & Random & 0 & 0 & 2.25\% & 7.25\% & 12.00\% \\
    LinkedMDB & Simulated & 0 & 0 & 7.69\% & 12.18\% & 15.38\% \\
    & Random & 0 & 0 & 0 & 0 & 0 \\
    Mondial & Simulated & 0 & 5.26\% & 9.65\% & 28.51\% & 32.02\% \\
    & Random & 0 & 0 & 1.25\% & 13.50\% & 10.50\% \\
    \hline
\end{tabular}
\end{table*}

\textbf{Case Study.}
One simulated query for DBpedia in our experiments was about four artworks:
\texttt{Vitruvian Man}, \texttt{Mona Lisa}, \texttt{Venus de Milo}, and \texttt{The Scream}.
As illustrated in Fig.~\ref{fig:eg-art},
the query failed under $D=4$ because \texttt{The Scream} was far away from the other three query entities in the graph.
Its maximum successful sub-query was obtained by removing \texttt{The Scream}, resulting in the following SA:
\texttt{Leonardo da Vinci} is the author of both \texttt{Vitruvian Man} and \texttt{Mona Lisa};
the latter is located in the \texttt{Louvre} which also hosts \texttt{Venus de Milo}.
Alternatively, the original query would be successful when $D$~was increased to~6,
to which the algorithm for SA search in~\cite{iswc16} could not scale.
Besides, a resulting large-sized SA would be diffuse,
containing almost all the entities and relations in Fig.~\ref{fig:eg-art}.

\subsection{Experiment on Efficiency}
We tested the running time of our proposed algorithms.
Both simulated and random queries were used for this experiment.

\subsubsection{Participating Algorithms}
We compared 7~algorithms.
We did not find any methods in the literature
that could be directly used to solve our problem,
so we adapted and improved a state-of-the-art algorithm for SA search as a baseline.

\textbf{BSL} was a baseline approach that
checked all possible sub-queries in non-increasing order of the number of query entities they contained,
and returned the first successful one.
To check a sub-query, BSL performed a state-of-the-art path-merging and path-pruning based algorithm for SA search~\cite{iswc16},
which would be terminated early after finding the first SA (thereby indicating success).
It was the first intuitive solution discussed in Section~\ref{sect:int}.

\textbf{BSL+} was a variant of BSL we developed in order to improve the performance.
Considering that BSL repeatedly ran the search algorithm~\cite{iswc16} on the same set of query entities,
we disabled path pruning but cached all the unpruned paths found in its first run,
so that they could be reused in the subsequent runs.

\textbf{CertQR} denoted our certificate-based algorithm presented in Section~\ref{sect:basic}.

\textbf{CertQR+} denoted our improved algorithm presented in Section~\ref{sect:opt},
not using any heuristics introduced in Section~\ref{sect:heu}.

\textbf{dg}, \textbf{ds}, and \textbf{dgs} denoted our improved algorithm presented in Section~\ref{sect:opt}
using the heuristics defined by Eq.~(\ref{eq:dg}), Eq.~(\ref{eq:ds}), and Eq.~(\ref{eq:dgs}), respectively.

For all the algorithms, entity-relation graphs were stored in memory using the JGraphT library\footnote{https://jgrapht.org/}. We used Boolean arrays to record visited entities and those having been processed by OptWithCert, so that set membership could be checked in constant time. The priority queue in CertQR+ was implemented with a heap. BSL and our proposed algorithms used distance oracles, which affordably occupied 9.2GB, 7.2GB, and 0.4GB memory for DBpedia, LinkedMDB, and Mondial, respectively.

\subsubsection{Experiment Design}
For each query, we ran each algorithm under each diameter constraint (i.e.,~$D$) in the range of 3--6.
To obtain more reliable results, we ran each algorithm three times and took the median running time.

We set a timeout of 1,000~seconds.
Any single run of an algorithm would be terminated when reaching timeout.
In that case, the running time was defined to be the timeout value.
Therefore, the longest running time reported in the following was bounded by 1,000~seconds.

\begin{figure*}[!t]
\centering
\subfloat[DBpedia (simulated queries)]{\includegraphics[width=0.6\columnwidth]{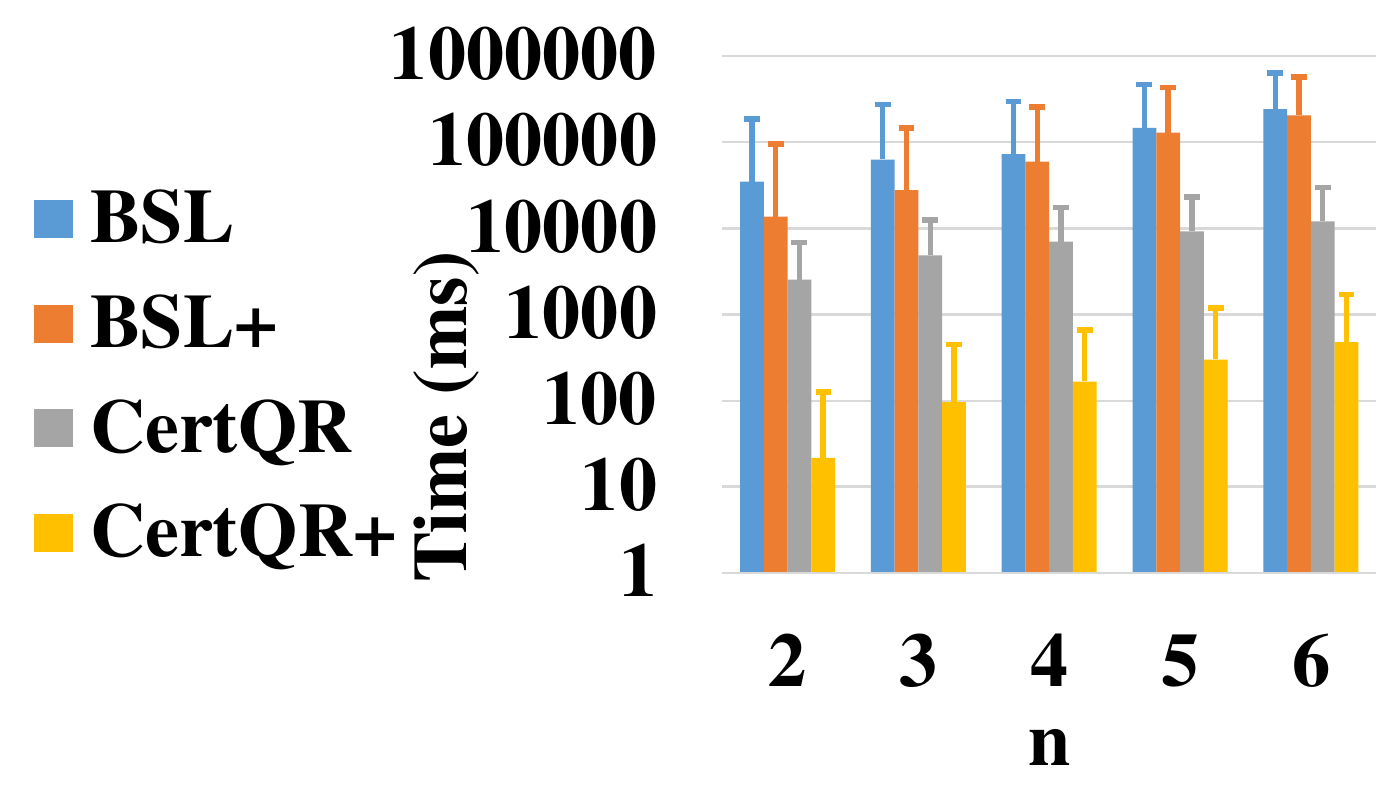}\label{fig:dbpan}}
\hfill
\subfloat[LinkedMDB (simulated queries)]{\includegraphics[width=0.6\columnwidth]{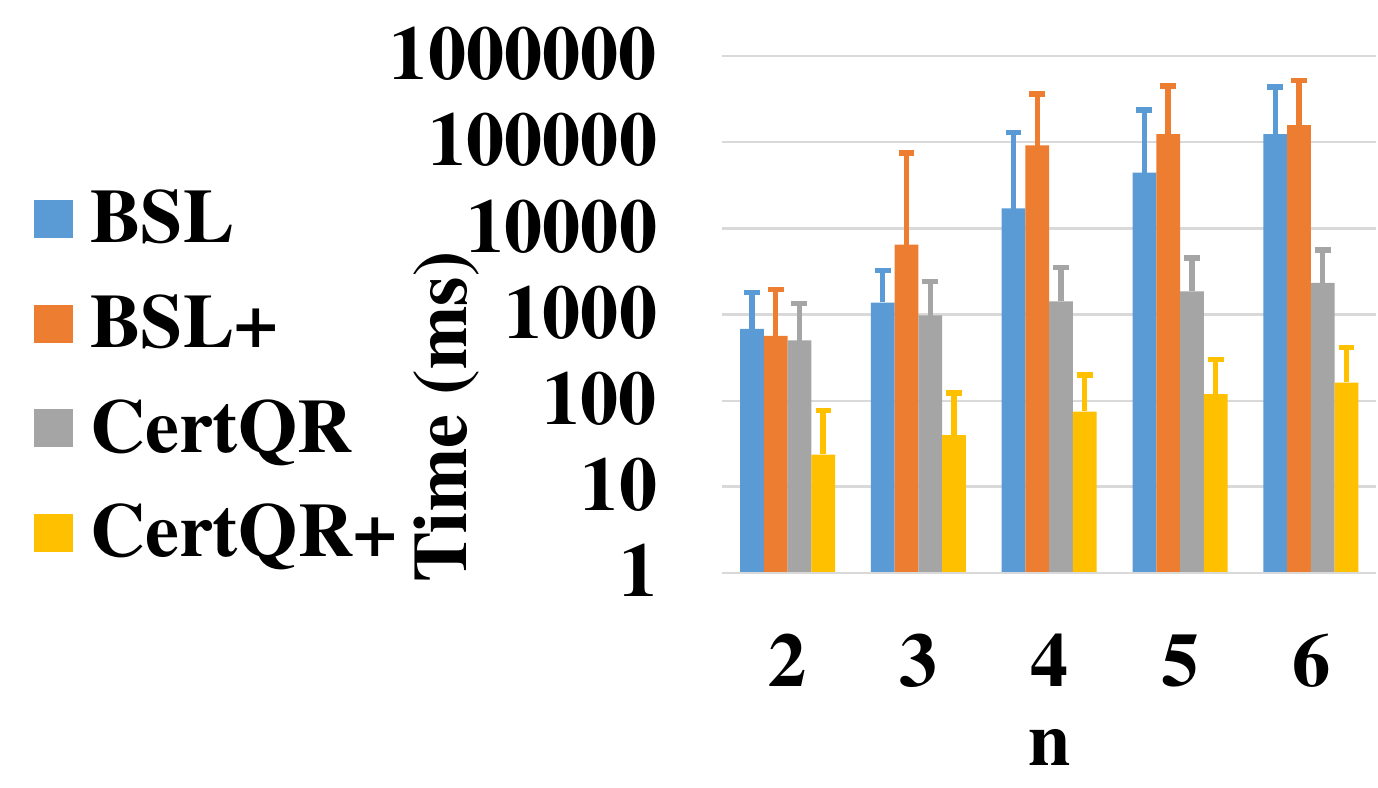}\label{fig:mdban}}
\hfill
\subfloat[Mondial (simulated queries)]{\includegraphics[width=0.6\columnwidth]{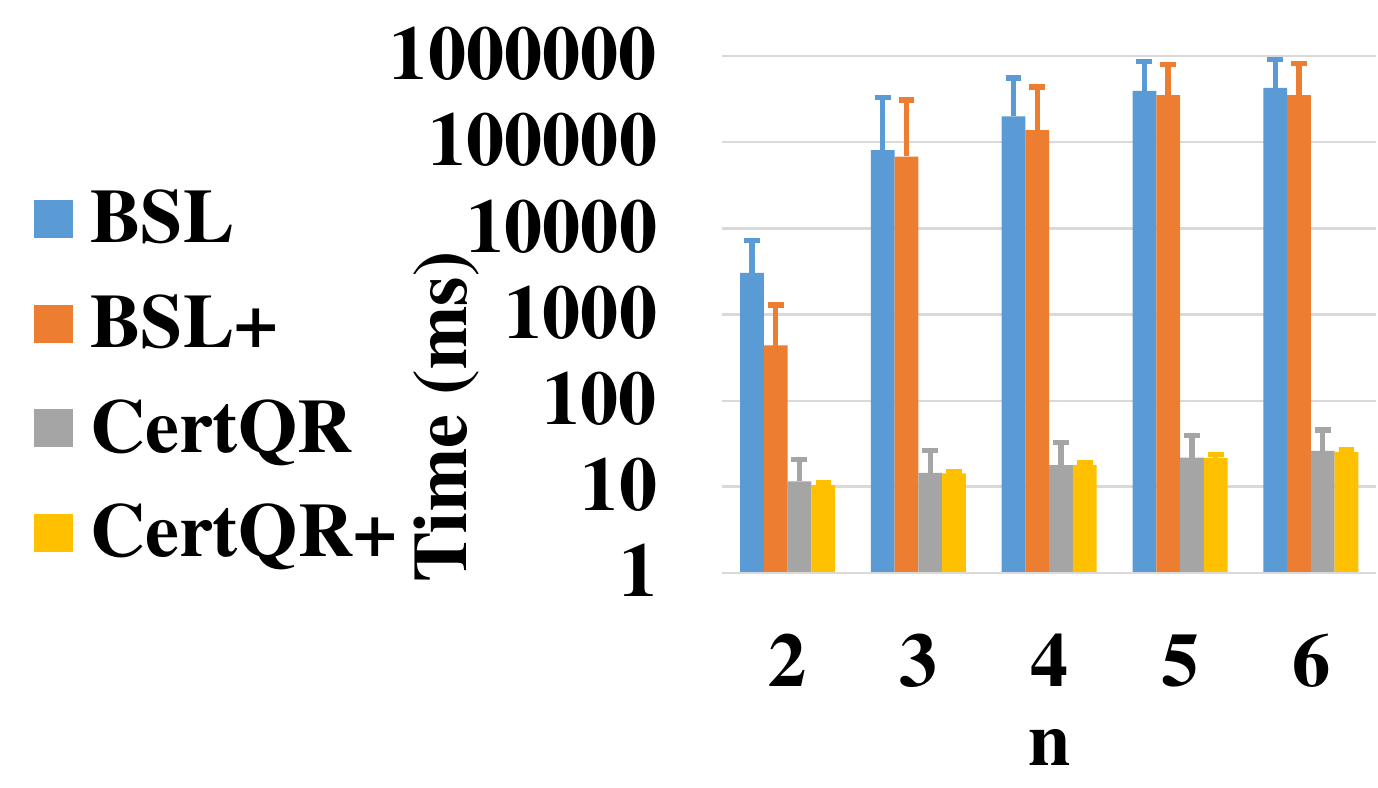}\label{fig:mondan}} \\
\subfloat[DBpedia (random queries)]{\includegraphics[width=0.6\columnwidth]{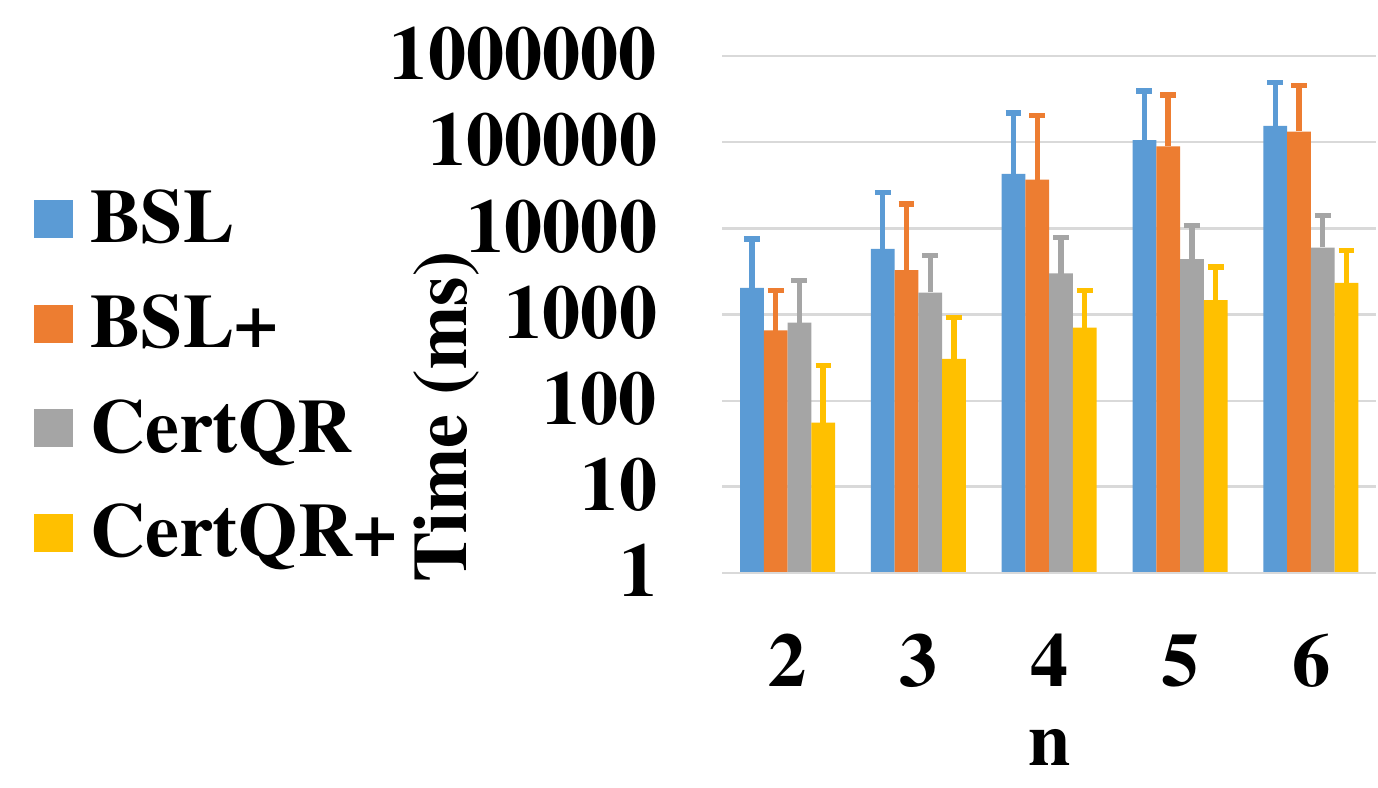}\label{fig:dbprn}}
\hfill
\subfloat[LinkedMDB (random queries)]{\includegraphics[width=0.6\columnwidth]{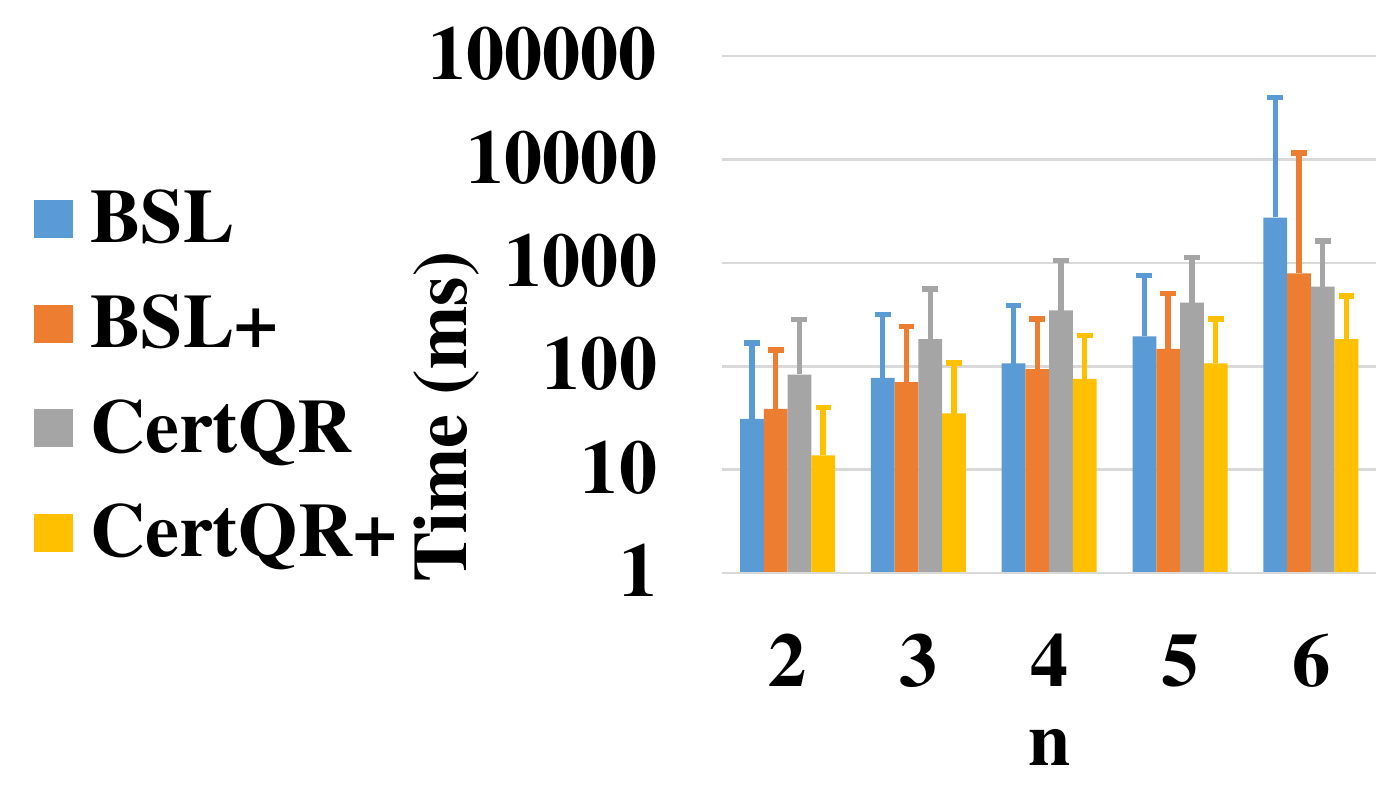}\label{fig:mdbrn}}
\hfill
\subfloat[Mondial (random queries)]{\includegraphics[width=0.6\columnwidth]{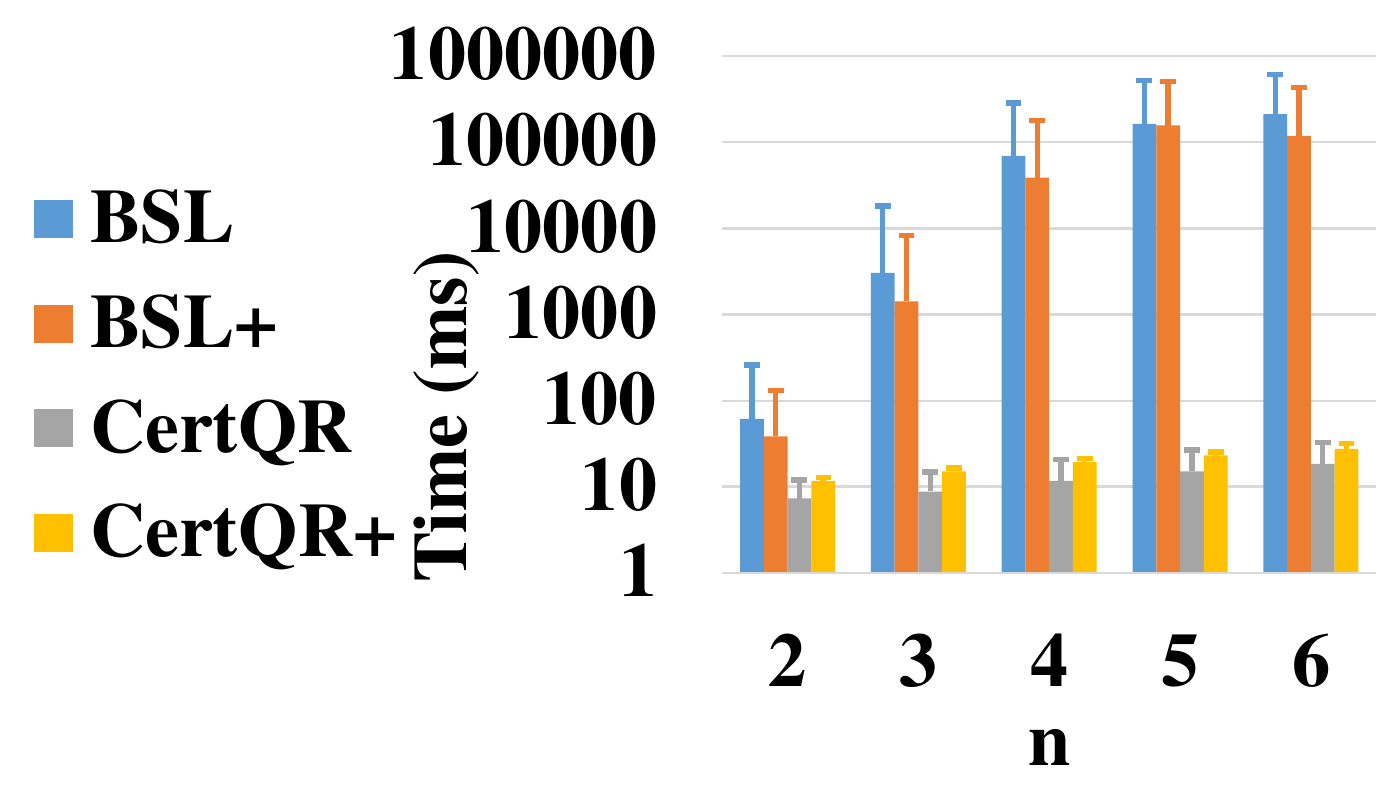}\label{fig:mondrn}}
\caption{Time per run, broken down by number of query entities (i.e.,~$n$).}
\label{fig:n}
\end{figure*}

\begin{figure*}[!t]
\centering
\subfloat[DBpedia (simulated queries)]{\includegraphics[width=0.6\columnwidth]{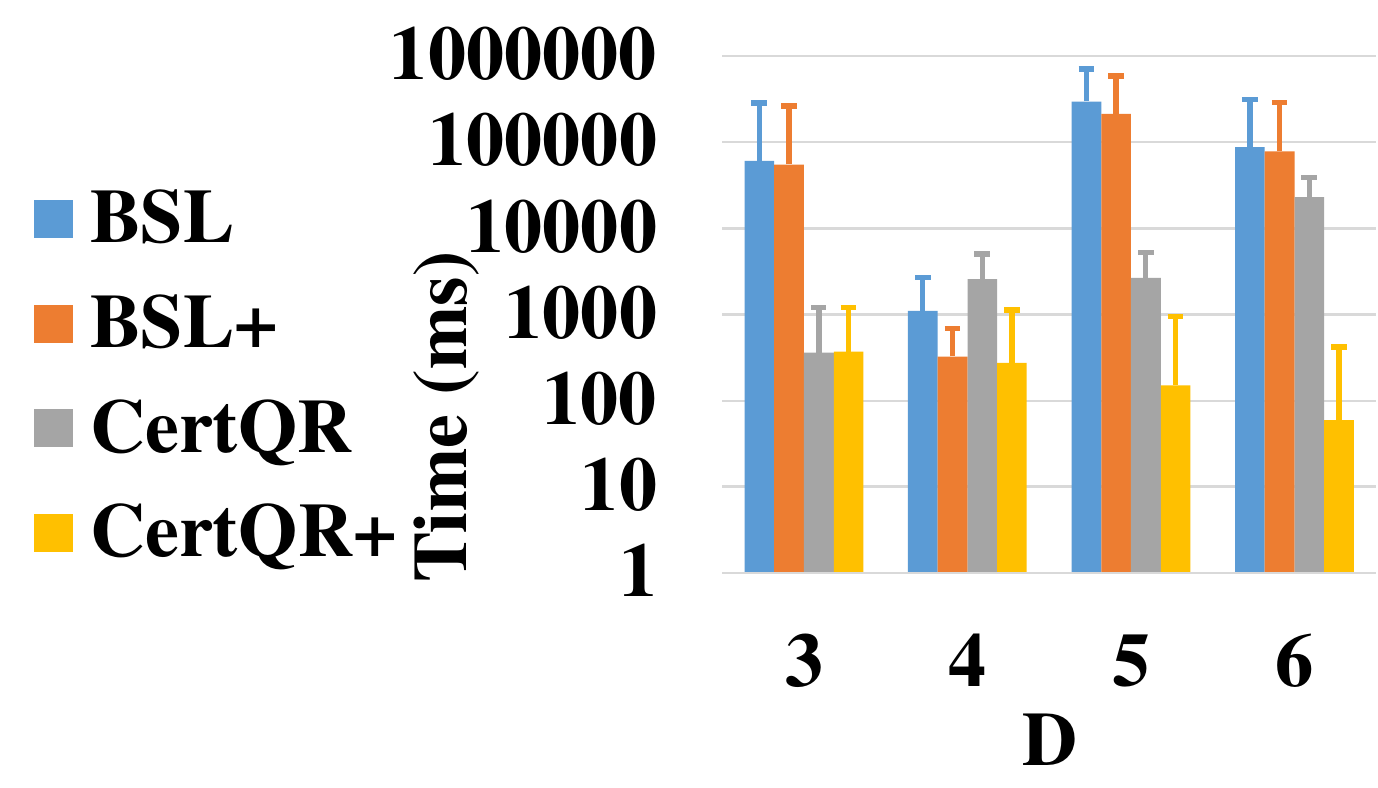}\label{fig:dbpad}}
\hfill
\subfloat[LinkedMDB (simulated queries)]{\includegraphics[width=0.6\columnwidth]{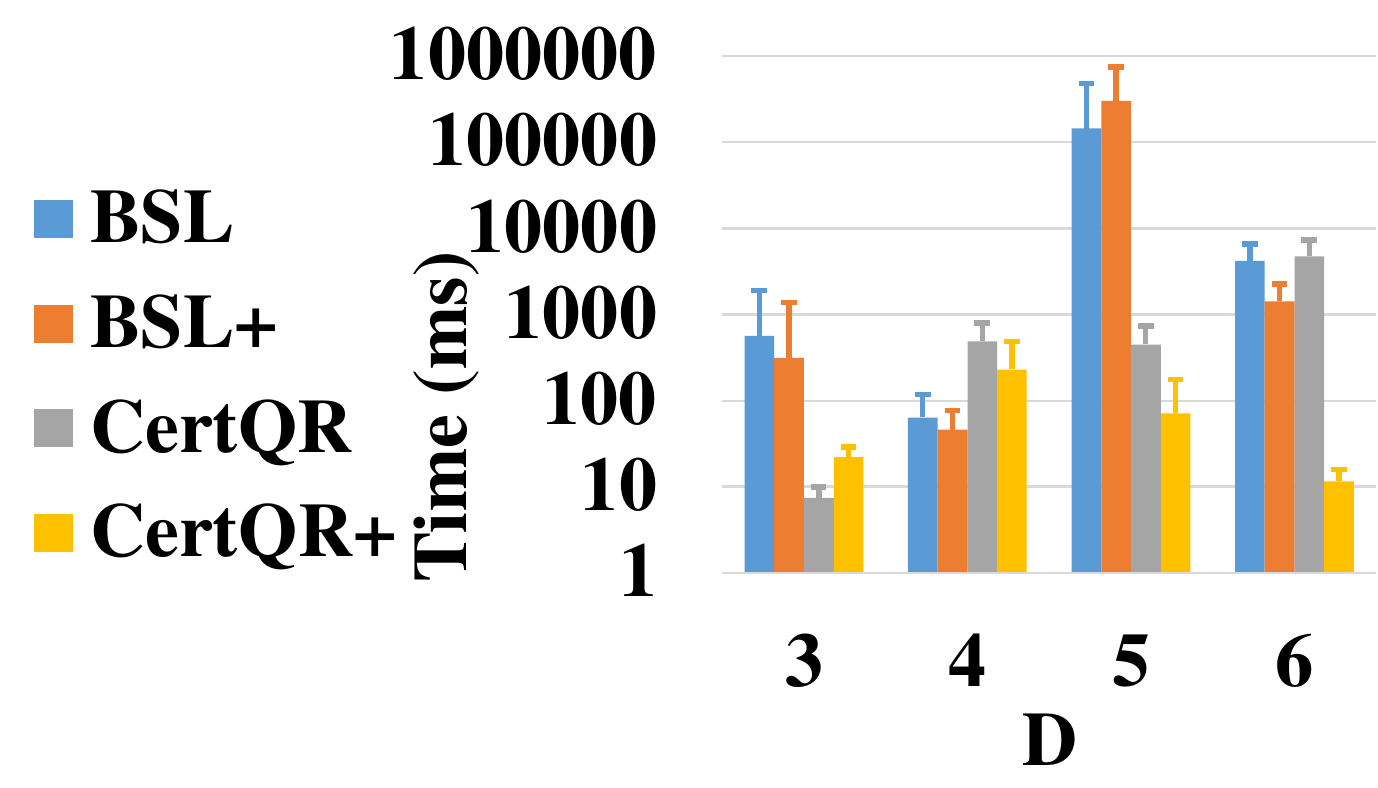}\label{fig:mdbad}}
\hfill
\subfloat[Mondial (simulated queries)]{\includegraphics[width=0.6\columnwidth]{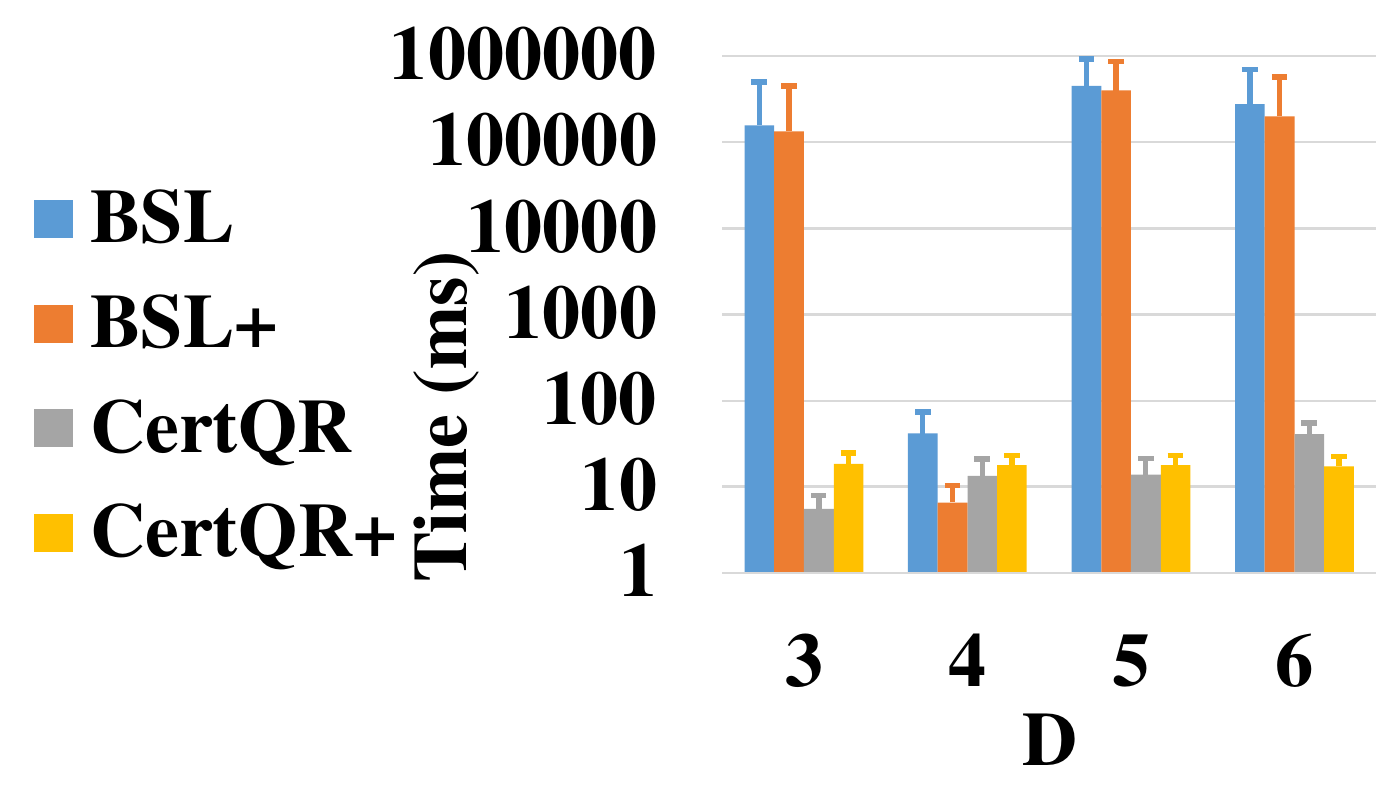}\label{fig:mondad}} \\
\subfloat[DBpedia (random queries)]{\includegraphics[width=0.6\columnwidth]{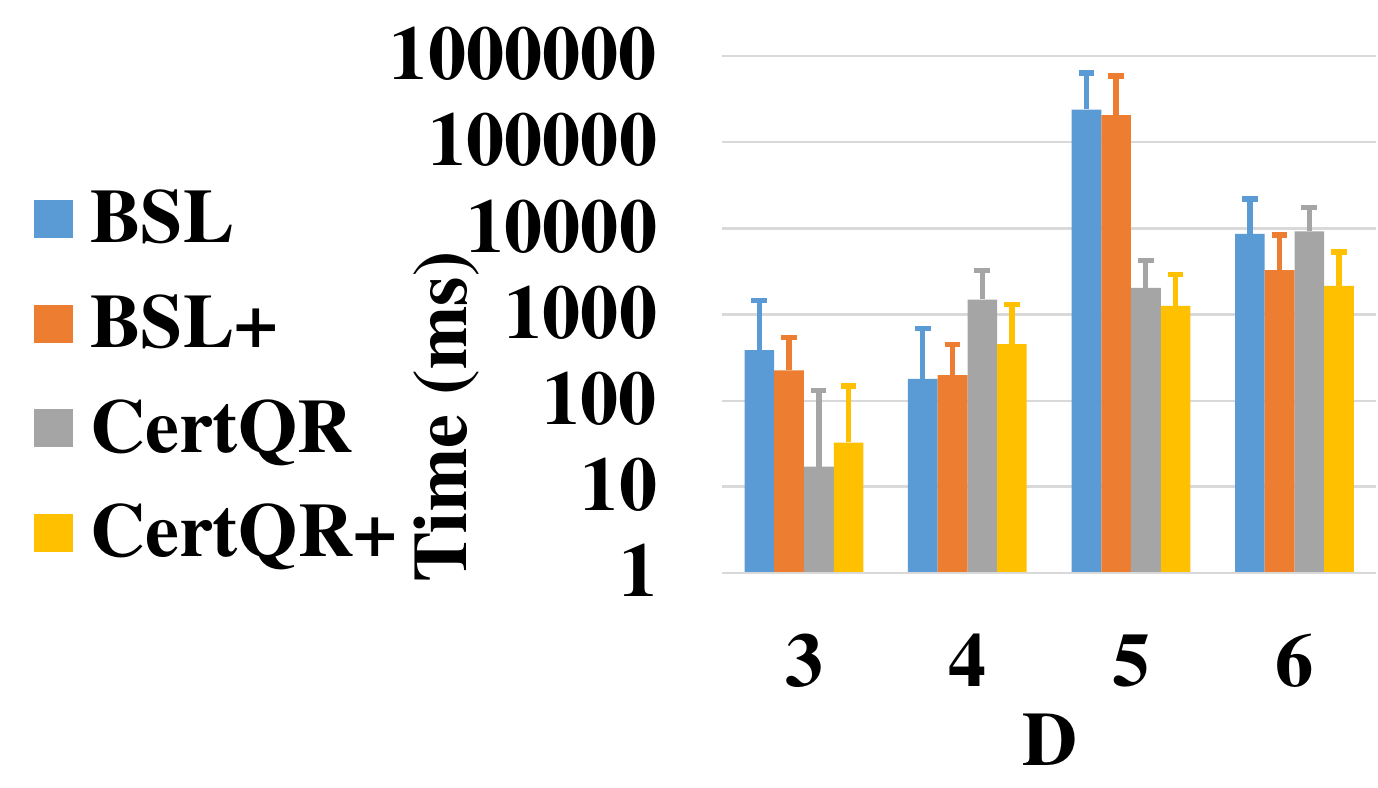}\label{fig:dbprd}}
\hfill
\subfloat[LinkedMDB (random queries)]{\includegraphics[width=0.6\columnwidth]{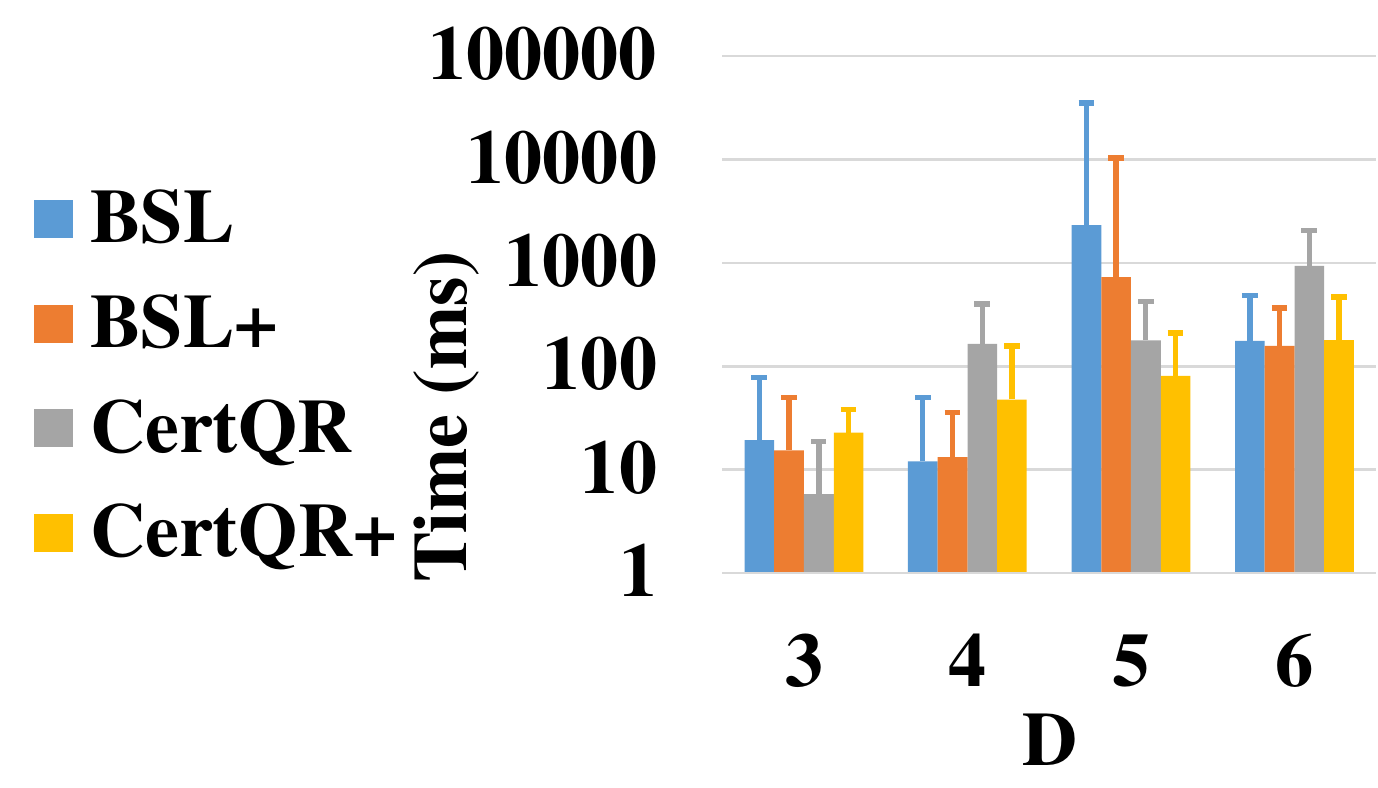}\label{fig:mdbrd}}
\hfill
\subfloat[Mondial (random queries)]{\includegraphics[width=0.6\columnwidth]{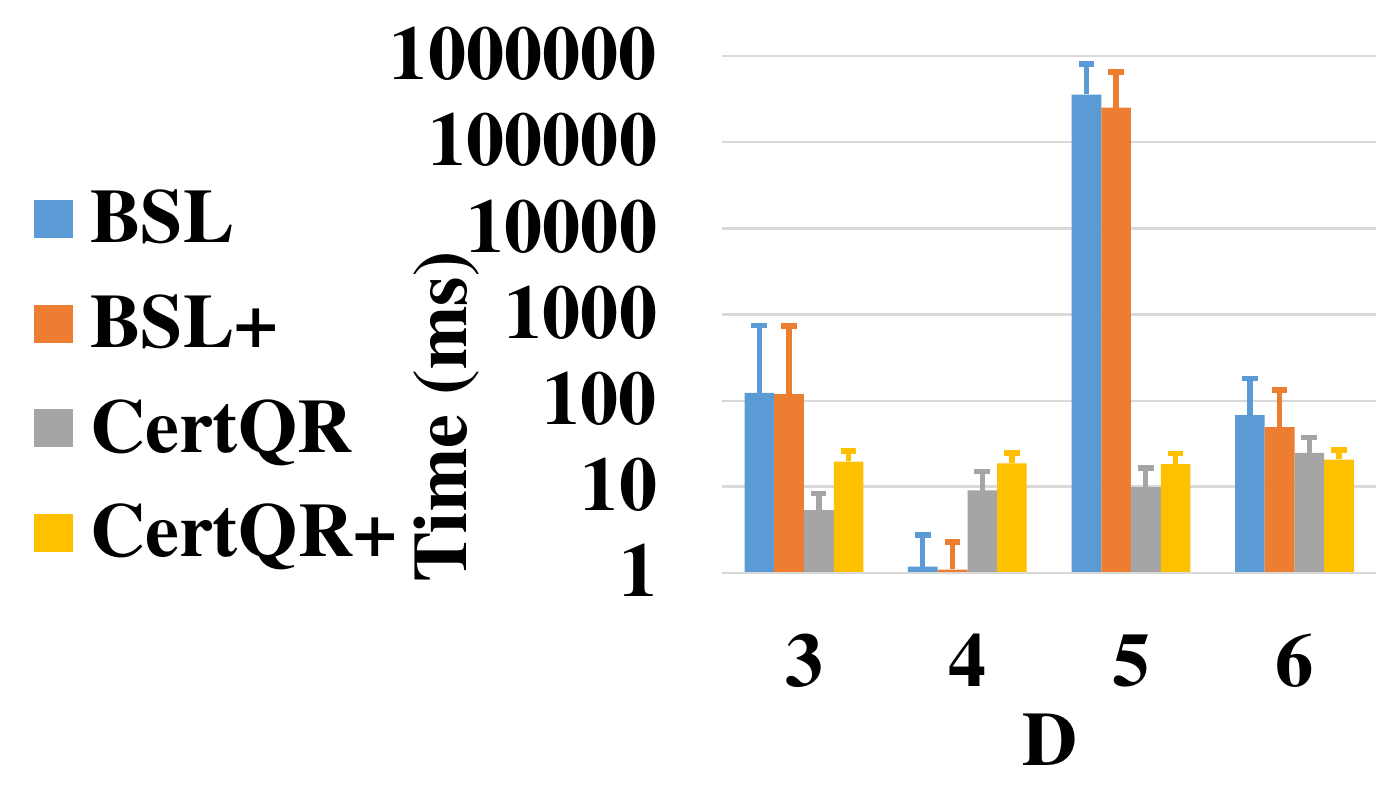}\label{fig:mondrd}}
\caption{Time per run, broken down by diameter constraint (i.e.,~$D$).}
\label{fig:d}
\end{figure*}

\subsubsection{Experiment Results and Analysis: CertQR(+) versus BSL(+)}
Table~\ref{tab:tobsl} and Table~\ref{tab:tobslp} show the proportions of runs where BSL and BSL+ reached timeout, respectively.
Timeout was more often when the number of query entities (i.e.,~$n$) increased,
and became notable from $n=4$,
suggesting the difficulty of the problem and the unscalability of this intuitive solution.
By comparison, our proposed CertQR and CertQR+ never reached timeout in the experiment,
showing their scalability.

The running time of BSL, BSL+, CertQR, and CertQR+ are shown in Fig.~\ref{fig:n} and Fig.~\ref{fig:d} on a logarithmic scale, with lines representing standard deviations.
In Fig.~\ref{fig:n} where the results were broken down by number of query entities (i.e.,~$n$),
although the improved BSL+ slightly outperformed BSL,
both of them were notably slower than CertQR on DBpedia (Fig.~\ref{fig:dbpan} and Fig.~\ref{fig:dbprn}) and Mondial (Fig.~\ref{fig:mondan} and Fig.~\ref{fig:mondrn}).
The differences were larger than an order of magnitude in most cases,
and became larger when $n$~increased,
which \emph{demonstrated the effectiveness of using certificates
and showed the higher efficiency of CertQR}.

CertQR+ consistently outperformed CertQR.
Their differences on large graphs,
i.e., DBpedia (Fig.~\ref{fig:dbpan} and Fig.~\ref{fig:dbprn}) and LinkedMDB (Fig.~\ref{fig:mdban} and Fig.~\ref{fig:mdbrn}),
were about an order of magnitude,
which \emph{demonstrated the effectiveness of our distance-based estimation for best-first search}.
In particular, CertQR+ used less than 1~second under most settings, and used 2~seconds only on occasion (in Fig.~\ref{fig:dbprn}).
Its performance was sufficient for practical use.

One exception was on random queries on LinkedMDB (Fig.~\ref{fig:mdbrn}),
where BSL and BSL+ outperformed CertQR though still were second to CertQR+.
The fairly good performance of BSL and BSL+ was mainly attributed to the sparseness of LinkedMDB.
According to Table~\ref{tab:ds}, the arc-vertex ratio of LinkedMDB was much lower than that of DBpedia and Mondial.
Recall that BSL and BSL+ used a path-merging based algorithm to search for an SA.
For a random query which usually consisted of distantly connected entities,
paths starting from them rarely met when the graph was sparse,
so fewer merging operations were performed.

\begin{table}[!t]
\caption{Performance Improvement on CertQR+}
\label{tab:heu}
\centering
\begin{tabular}{llrrr}
    \hline
    Dataset & Query & dg & ds & dgs \\
    \hline
    DBpedia & Simulated & 19.55\% & 13.52\% & \textbf{23.43\%} \\
    & Random & 15.41\% & 6.07\% & \textbf{18.99\%} \\
    LinkedMDB & Simulated & 26.97\% & 15.12\% & \textbf{31.20\%} \\
    & Random & 17.94\% & 9.72\% & \textbf{17.96\%} \\
    Mondial & Simulated & -4.08\% & -1.74\% & -0.82\% \\
    & Random & 1.93\% & 1.04\% & \textbf{5.23\%} \\
    \hline
\end{tabular}
\end{table}

In Fig.~\ref{fig:d} where the results were broken down by diameter constraint (i.e.,~$D$),
CertQR+ was also generally the fastest.
We would like to clarify two phenomena that might confuse.
First, the performance of BSL and BSL+ fluctuated,
because they ran slower when $D$~was odd.
In that case, merging a set of paths of length~$\left\lceil{\frac{D}{2}}\right\rceil$
could form an SA of diameter $D+1$ and violate the diameter constraint,
so the algorithm had to continue with other merging options.
Second, for simulated queries (Fig.~\ref{fig:dbpad}, Fig.~\ref{fig:mdbad}, and Fig.~\ref{fig:mondad}) which were successful in most cases, CertQR+ generally used less time when $D$~increased,
because the number of certificates also increased.
It became easier for best-first search to find a certificate,
whereas the search space of other algorithms grew,
which \emph{showed the scalability of CertQR+ from another perspective}.

\subsubsection{Experiment Results and Analysis: dg, ds, and dgs}
Table~\ref{tab:heu} shows the performance improvement by adding the three heuristics to CertQR+.
In most cases, both dg and ds considerably improved the performance.
Their combination (i.e., dgs) produced the best results.
The results suggested that \emph{the two fine-grained heuristics were both effective,
and their effects were complementary.}
However, adding heuristics caused negative effects on Mondial,
where most simulated queries were successful when $D$~was small according to Fig.~\ref{fig:mondarn}.
Because the graph was small and dense,
there were many certificates which could be quickly found by CertQR+ without using any heuristics.
Only in that case, adding heuristics did not help much, but their computation took additional time.
\section{Related Work}\label{sect:rw}
SA search has attracted vast research attention from the Semantic Web and database communities
\cite{semrank,rex,rho,explass,vldb04,brahms,conkar,sisp,iswc16,tkde17,ming,star,ceps,explain,relfinder}.
Whereas existing efforts are focused on search algorithms and ranking criteria,
\emph{we present the first study on query relaxation},
which is useful when the compactness of allowable SAs is constrained~\cite{explass,brahms,sisp,rex,ming,ceps,iswc16,tkde17,explain}.
This new research problem and the techniques we use are fundamentally different from those considered in the literature.
In particular, to verify the success of a relationship query,
\emph{we search for a certificate entity
instead of expensively searching for an SA}.
We mainly exploit distances between entities
to prune the search space and improve the performance.
Distance is also used in related research~\cite{iswc16,blinks,banks2,rclique,pruneddp}.
By comparison, our theoretical contributions are distinguished by:
\emph{a distance-based certificate based on which the success of a query can be more efficiently verified (i.e., Theorem~\ref{THE:CERT})},
and \emph{a distance-based estimation which guarantees the optimality of best-first search (i.e., Theorem~\ref{THE:OPT})}.

Query relaxation for other related tasks have been formulated in completely different ways.
In entity search, a query consists of a set of property values describing entity targets.
Query relaxation allows an answer to have property values
that are not exactly the same as but similar to those specified in the query~\cite{qrentity}.
For a path query formulated using regular expressions characterizing path targets,
it can be relaxed to less specific expressions based on inference rules~\cite{qrpath}.
More general graph queries (e.g., SPARQL queries for RDF) can be relaxed
by substituting constants with variables,
and by removing its constituents (e.g., triple patterns) or making them optional~\cite{qrsparql1,qrsparql2}.
Clearly, these ad hoc solutions could not be directly applied to the problem we consider.
A relationship query consists of a set of entities, which is different from the above queries.
\section{Conclusion and Future Work}\label{sect:concl}
The ability to relax a failing relationship query and provide alternative results
improves the usability of an SA search system.
We show that simply relaxing the compactness constraint is impracticable,
and we turn to minimally relaxing the query entities.
Our proposed certificate-based best-first search algorithm is more scalable than baselines,
and its performance could meet the demands of practical use.
We believe its application is not restricted to SA search.
For example, our proposed algorithm can be straightforwardly extended to relax keyword queries on graphs
where each query keyword can be mapped to multiple query entities.
This will be our future work.

Our solution has the following limitations.
First, our algorithm depends on fast distance calculation.
In this work we implement a distance oracle to achieve a trade-off between time and space.
However, it requires rebuild when an entity-relation graph evolves.
Besides, resource-limited machines may not store it in memory, which could notably influence the performance.
We will seek better substitute techniques in future work.
Second, although our algorithm has minimized result incompleteness by finding a maximum successful sub-query,
a user may still be unsatisfied with relaxed results due to missing query entities.
One potentially better solution is to perform replacement rather than removal,
e.g., to replace some query entities with other similar entities.
It would be interesting to conduct a user study to compare different kinds of solutions.
Third, we separate query relaxation from SA search in order to generalize our solution to a wider range of application.
However, it may be more efficient to have a hybrid algorithm that directly outputs relaxed top-ranked SAs.
We will explore this direction based on some common ranking criteria.

\section*{Acknowledgment}
This work was supported
in part by the NSFC under Grant 61772264, and in part by the Six Talent Peaks Program of Jiangsu Province under Grant RJFW-011.

\bibliography{main}

\begin{thebibliography}{10}
\expandafter\ifx\csname url\endcsname\relax
  \def\url#1{\texttt{#1}}\fi
\expandafter\ifx\csname urlprefix\endcsname\relax\def\urlprefix{URL }\fi
\expandafter\ifx\csname href\endcsname\relax
  \def\href#1#2{#2} \def\path#1{#1}\fi

\bibitem{rho}
K.~Anyanwu, A.~P. Sheth,
  \href{http://doi.acm.org/10.1145/775152.775249}{$\rho$-queries: enabling
  querying for semantic associations on the semantic web}, in: Proceedings of
  the Twelfth International World Wide Web Conference, {WWW} 2003, Budapest,
  Hungary, May 20-24, 2003, 2003, pp. 690--699.
\newblock \href {http://dx.doi.org/10.1145/775152.775249}
  {\path{doi:10.1145/775152.775249}}.
\newline\urlprefix\url{http://doi.acm.org/10.1145/775152.775249}

\bibitem{explass}
G.~Cheng, Y.~Zhang, Y.~Qu,
  \href{https://doi.org/10.1007/978-3-319-11915-1_27}{Explass: Exploring
  associations between entities via top-k ontological patterns and facets}, in:
  The Semantic Web - {ISWC} 2014 - 13th International Semantic Web Conference,
  Riva del Garda, Italy, October 19-23, 2014. Proceedings, Part {II}, 2014, pp.
  422--437.
\newblock \href {http://dx.doi.org/10.1007/978-3-319-11915-1_27}
  {\path{doi:10.1007/978-3-319-11915-1_27}}.
\newline\urlprefix\url{https://doi.org/10.1007/978-3-319-11915-1_27}

\bibitem{vldb04}
C.~Halaschek{-}Wiener, B.~Aleman{-}Meza, I.~B. Arpinar, A.~P. Sheth,
  \href{http://www.vldb.org/conf/2004/DEMP17.PDF}{Discovering and ranking
  semantic associations over a large {RDF} metabase}, in: (e)Proceedings of the
  Thirtieth International Conference on Very Large Data Bases, Toronto, Canada,
  August 31 - September 3 2004, 2004, pp. 1317--1320.
\newline\urlprefix\url{http://www.vldb.org/conf/2004/DEMP17.PDF}

\bibitem{brahms}
M.~Janik, K.~Kochut, \href{https://doi.org/10.1007/11574620_32}{{BRAHMS:} {A}
  workbench {RDF} store and high performance memory system for semantic
  association discovery}, in: The Semantic Web - {ISWC} 2005, 4th International
  Semantic Web Conference, {ISWC} 2005, Galway, Ireland, November 6-10, 2005,
  Proceedings, 2005, pp. 431--445.
\newblock \href {http://dx.doi.org/10.1007/11574620_32}
  {\path{doi:10.1007/11574620_32}}.
\newline\urlprefix\url{https://doi.org/10.1007/11574620_32}

\bibitem{conkar}
M.~Zhou, Y.~Pan, Y.~Wu,
  \href{http://doi.acm.org/10.1145/2063576.2063987}{Efficient association
  discovery with keyword-based constraints on large graph data}, in:
  Proceedings of the 20th {ACM} Conference on Information and Knowledge
  Management, {CIKM} 2011, Glasgow, United Kingdom, October 24-28, 2011, 2011,
  pp. 2441--2444.
\newblock \href {http://dx.doi.org/10.1145/2063576.2063987}
  {\path{doi:10.1145/2063576.2063987}}.
\newline\urlprefix\url{http://doi.acm.org/10.1145/2063576.2063987}

\bibitem{semrank}
K.~Anyanwu, A.~Maduko, A.~P. Sheth,
  \href{http://doi.acm.org/10.1145/1060745.1060766}{Semrank: ranking complex
  relationship search results on the semantic web}, in: Proceedings of the 14th
  international conference on World Wide Web, {WWW} 2005, Chiba, Japan, May
  10-14, 2005, 2005, pp. 117--127.
\newblock \href {http://dx.doi.org/10.1145/1060745.1060766}
  {\path{doi:10.1145/1060745.1060766}}.
\newline\urlprefix\url{http://doi.acm.org/10.1145/1060745.1060766}

\bibitem{rex}
L.~Fang, A.~D. Sarma, C.~Yu, P.~Bohannon,
  \href{http://www.vldb.org/pvldb/vol5/p241_lujunfang_vldb2012.pdf}{{REX:}
  explaining relationships between entity pairs}, {PVLDB} 5~(3) (2011)
  241--252.
\newblock \href {http://dx.doi.org/10.14778/2078331.2078339}
  {\path{doi:10.14778/2078331.2078339}}.
\newline\urlprefix\url{http://www.vldb.org/pvldb/vol5/p241_lujunfang_vldb2012.pdf}

\bibitem{explain}
G.~Pirr{\`{o}}, \href{https://doi.org/10.1007/978-3-319-25007-6_36}{Explaining
  and suggesting relatedness in knowledge graphs}, in: The Semantic Web -
  {ISWC} 2015 - 14th International Semantic Web Conference, Bethlehem, PA, USA,
  October 11-15, 2015, Proceedings, Part {I}, 2015, pp. 622--639.
\newblock \href {http://dx.doi.org/10.1007/978-3-319-25007-6_36}
  {\path{doi:10.1007/978-3-319-25007-6_36}}.
\newline\urlprefix\url{https://doi.org/10.1007/978-3-319-25007-6_36}

\bibitem{relfinder}
J.~Lehmann, J.~Sch{\"{u}}ppel, S.~Auer,
  \href{http://subs.emis.de/LNI/Proceedings/Proceedings113/article1855.html}{Discovering
  unknown connections - the dbpedia relationship finder}, in: The Social
  Semantic Web 2007, Proceedings of the 1st Conference on Social Semantic Web
  (CSSW), September 26-28, 2007, Leipzig, Germany., 2007, pp. 99--110.
\newline\urlprefix\url{http://subs.emis.de/LNI/Proceedings/Proceedings113/article1855.html}

\bibitem{sisp}
C.~Chen, G.~Wang, H.~Liu, J.~Xin, Y.~Yuan,
  \href{http://doi.acm.org/10.1145/2063576.2063645}{{SISP:} a new framework for
  searching the informative subgraph based on {PSO}}, in: Proceedings of the
  20th {ACM} Conference on Information and Knowledge Management, {CIKM} 2011,
  Glasgow, United Kingdom, October 24-28, 2011, 2011, pp. 453--462.
\newblock \href {http://dx.doi.org/10.1145/2063576.2063645}
  {\path{doi:10.1145/2063576.2063645}}.
\newline\urlprefix\url{http://doi.acm.org/10.1145/2063576.2063645}

\bibitem{iswc16}
G.~Cheng, D.~Liu, Y.~Qu,
  \href{https://doi.org/10.1007/978-3-319-46523-4_8}{Efficient algorithms for
  association finding and frequent association pattern mining}, in: The
  Semantic Web - {ISWC} 2016 - 15th International Semantic Web Conference,
  Kobe, Japan, October 17-21, 2016, Proceedings, Part {I}, 2016, pp. 119--134.
\newblock \href {http://dx.doi.org/10.1007/978-3-319-46523-4_8}
  {\path{doi:10.1007/978-3-319-46523-4_8}}.
\newline\urlprefix\url{https://doi.org/10.1007/978-3-319-46523-4_8}

\bibitem{tkde17}
G.~Cheng, F.~Shao, Y.~Qu, \href{https://doi.org/10.1109/TKDE.2017.2735970}{An
  empirical evaluation of techniques for ranking semantic associations}, {IEEE}
  Trans. Knowl. Data Eng. 29~(11) (2017) 2388--2401.
\newblock \href {http://dx.doi.org/10.1109/TKDE.2017.2735970}
  {\path{doi:10.1109/TKDE.2017.2735970}}.
\newline\urlprefix\url{https://doi.org/10.1109/TKDE.2017.2735970}

\bibitem{ming}
G.~Kasneci, S.~Elbassuoni, G.~Weikum,
  \href{http://doi.acm.org/10.1145/1645953.1646196}{{MING:} mining informative
  entity relationship subgraphs}, in: Proceedings of the 18th {ACM} Conference
  on Information and Knowledge Management, {CIKM} 2009, Hong Kong, China,
  November 2-6, 2009, 2009, pp. 1653--1656.
\newblock \href {http://dx.doi.org/10.1145/1645953.1646196}
  {\path{doi:10.1145/1645953.1646196}}.
\newline\urlprefix\url{http://doi.acm.org/10.1145/1645953.1646196}

\bibitem{star}
G.~Kasneci, M.~Ramanath, M.~Sozio, F.~M. Suchanek, G.~Weikum,
  \href{https://doi.org/10.1109/ICDE.2009.64}{{STAR:} steiner-tree
  approximation in relationship graphs}, in: Proceedings of the 25th
  International Conference on Data Engineering, {ICDE} 2009, March 29 2009 -
  April 2 2009, Shanghai, China, 2009, pp. 868--879.
\newblock \href {http://dx.doi.org/10.1109/ICDE.2009.64}
  {\path{doi:10.1109/ICDE.2009.64}}.
\newline\urlprefix\url{https://doi.org/10.1109/ICDE.2009.64}

\bibitem{ceps}
H.~Tong, C.~Faloutsos,
  \href{http://doi.acm.org/10.1145/1150402.1150448}{Center-piece subgraphs:
  problem definition and fast solutions}, in: Proceedings of the Twelfth {ACM}
  {SIGKDD} International Conference on Knowledge Discovery and Data Mining,
  Philadelphia, PA, USA, August 20-23, 2006, 2006, pp. 404--413.
\newblock \href {http://dx.doi.org/10.1145/1150402.1150448}
  {\path{doi:10.1145/1150402.1150448}}.
\newline\urlprefix\url{http://doi.acm.org/10.1145/1150402.1150448}

\bibitem{appns}
A.~P. Sheth, B.~Aleman{-}Meza, I.~B. Arpinar, C.~Bertram, Y.~S. Warke,
  C.~Ramakrishnan, C.~Halaschek, K.~Anyanwu, D.~Avant, F.~S. Arpinar,
  K.~Kochut, \href{https://doi.org/10.4018/jdm.2005010103}{Semantic association
  identification and knowledge discovery for national security applications},
  J. Database Manag. 16~(1) (2005) 33--53.
\newblock \href {http://dx.doi.org/10.4018/jdm.2005010103}
  {\path{doi:10.4018/jdm.2005010103}}.
\newline\urlprefix\url{https://doi.org/10.4018/jdm.2005010103}

\bibitem{appbio}
Y.~Makita, N.~Kobayashi, Y.~Yoshida, K.~Doi, Y.~Mochizuki, K.~Nishikata,
  A.~Matsushima, S.~Takahashi, M.~Ishii, T.~Takatsuki, R.~Bhatia,
  Z.~Khadbaatar, H.~Watabe, H.~Masuya, T.~Toyoda,
  \href{https://doi.org/10.1093/nar/gkt474}{Posmed: ranking genes and
  bioresources based on semantic web association study}, Nucleic Acids Research
  41~(Webserver-Issue) (2013) 109--114.
\newblock \href {http://dx.doi.org/10.1093/nar/gkt474}
  {\path{doi:10.1093/nar/gkt474}}.
\newline\urlprefix\url{https://doi.org/10.1093/nar/gkt474}

\bibitem{banks}
G.~Bhalotia, A.~Hulgeri, C.~Nakhe, S.~Chakrabarti, S.~Sudarshan,
  \href{https://doi.org/10.1109/ICDE.2002.994756}{Keyword searching and
  browsing in databases using {BANKS}}, in: Proceedings of the 18th
  International Conference on Data Engineering, San Jose, CA, USA, February 26
  - March 1, 2002, 2002, pp. 431--440.
\newblock \href {http://dx.doi.org/10.1109/ICDE.2002.994756}
  {\path{doi:10.1109/ICDE.2002.994756}}.
\newline\urlprefix\url{https://doi.org/10.1109/ICDE.2002.994756}

\bibitem{dpbf}
B.~Ding, J.~X. Yu, S.~Wang, L.~Qin, X.~Zhang, X.~Lin,
  \href{https://doi.org/10.1109/ICDE.2007.367929}{Finding top-k min-cost
  connected trees in databases}, in: Proceedings of the 23rd International
  Conference on Data Engineering, {ICDE} 2007, The Marmara Hotel, Istanbul,
  Turkey, April 15-20, 2007, 2007, pp. 836--845.
\newblock \href {http://dx.doi.org/10.1109/ICDE.2007.367929}
  {\path{doi:10.1109/ICDE.2007.367929}}.
\newline\urlprefix\url{https://doi.org/10.1109/ICDE.2007.367929}

\bibitem{blinks}
H.~He, H.~Wang, J.~Yang, P.~S. Yu,
  \href{http://doi.acm.org/10.1145/1247480.1247516}{{BLINKS:} ranked keyword
  searches on graphs}, in: Proceedings of the {ACM} {SIGMOD} International
  Conference on Management of Data, Beijing, China, June 12-14, 2007, 2007, pp.
  305--316.
\newblock \href {http://dx.doi.org/10.1145/1247480.1247516}
  {\path{doi:10.1145/1247480.1247516}}.
\newline\urlprefix\url{http://doi.acm.org/10.1145/1247480.1247516}

\bibitem{banks2}
V.~Kacholia, S.~Pandit, S.~Chakrabarti, S.~Sudarshan, R.~Desai, H.~Karambelkar,
  \href{http://www.vldb.org/archives/website/2005/program/paper/wed/p505-kacholia.pdf}{Bidirectional
  expansion for keyword search on graph databases}, in: Proceedings of the 31st
  International Conference on Very Large Data Bases, Trondheim, Norway, August
  30 - September 2, 2005, 2005, pp. 505--516.
\newline\urlprefix\url{http://www.vldb.org/archives/website/2005/program/paper/wed/p505-kacholia.pdf}

\bibitem{rclique}
M.~Kargar, A.~An, \href{http://www.vldb.org/pvldb/vol4/p681-kargar.pdf}{Keyword
  search in graphs: Finding r-cliques}, {PVLDB} 4~(10) (2011) 681--692.
\newblock \href {http://dx.doi.org/10.14778/2021017.2021025}
  {\path{doi:10.14778/2021017.2021025}}.
\newline\urlprefix\url{http://www.vldb.org/pvldb/vol4/p681-kargar.pdf}

\bibitem{le}
W.~Le, F.~Li, A.~Kementsietsidis, S.~Duan,
  \href{https://doi.org/10.1109/TKDE.2014.2302294}{Scalable keyword search on
  large {RDF} data}, {IEEE} Trans. Knowl. Data Eng. 26~(11) (2014) 2774--2788.
\newblock \href {http://dx.doi.org/10.1109/TKDE.2014.2302294}
  {\path{doi:10.1109/TKDE.2014.2302294}}.
\newline\urlprefix\url{https://doi.org/10.1109/TKDE.2014.2302294}

\bibitem{ease}
G.~Li, B.~C. Ooi, J.~Feng, J.~Wang, L.~Zhou,
  \href{http://doi.acm.org/10.1145/1376616.1376706}{{EASE:} an effective 3-in-1
  keyword search method for unstructured, semi-structured and structured data},
  in: Proceedings of the {ACM} {SIGMOD} International Conference on Management
  of Data, {SIGMOD} 2008, Vancouver, BC, Canada, June 10-12, 2008, 2008, pp.
  903--914.
\newblock \href {http://dx.doi.org/10.1145/1376616.1376706}
  {\path{doi:10.1145/1376616.1376706}}.
\newline\urlprefix\url{http://doi.acm.org/10.1145/1376616.1376706}

\bibitem{pruneddp}
R.~Li, L.~Qin, J.~X. Yu, R.~Mao,
  \href{http://doi.acm.org/10.1145/2882903.2915217}{Efficient and progressive
  group steiner tree search}, in: Proceedings of the 2016 International
  Conference on Management of Data, {SIGMOD} Conference 2016, San Francisco,
  CA, USA, June 26 - July 01, 2016, 2016, pp. 91--106.
\newblock \href {http://dx.doi.org/10.1145/2882903.2915217}
  {\path{doi:10.1145/2882903.2915217}}.
\newline\urlprefix\url{http://doi.acm.org/10.1145/2882903.2915217}

\bibitem{emp}
J.~Coffman, A.~C. Weaver, \href{https://doi.org/10.1109/TKDE.2012.228}{An
  empirical performance evaluation of relational keyword search techniques},
  {IEEE} Trans. Knowl. Data Eng. 26~(1) (2014) 30--42.
\newblock \href {http://dx.doi.org/10.1109/TKDE.2012.228}
  {\path{doi:10.1109/TKDE.2012.228}}.
\newline\urlprefix\url{https://doi.org/10.1109/TKDE.2012.228}

\bibitem{qrdb}
N.~Koudas, C.~Li, A.~K.~H. Tung, R.~Vernica,
  \href{http://dl.acm.org/citation.cfm?id=1164146}{Relaxing join and selection
  queries}, in: Proceedings of the 32nd International Conference on Very Large
  Data Bases, Seoul, Korea, September 12-15, 2006, 2006, pp. 199--210.
\newline\urlprefix\url{http://dl.acm.org/citation.cfm?id=1164146}

\bibitem{qrsparql1}
S.~Elbassuoni, M.~Ramanath, G.~Weikum,
  \href{https://doi.org/10.1007/978-3-642-21064-8_5}{Query relaxation for
  entity-relationship search}, in: The Semanic Web: Research and Applications -
  8th Extended Semantic Web Conference, {ESWC} 2011, Heraklion, Crete, Greece,
  May 29 - June 2, 2011, Proceedings, Part {II}, 2011, pp. 62--76.
\newblock \href {http://dx.doi.org/10.1007/978-3-642-21064-8_5}
  {\path{doi:10.1007/978-3-642-21064-8_5}}.
\newline\urlprefix\url{https://doi.org/10.1007/978-3-642-21064-8_5}

\bibitem{qrxml}
G.~Koloniari, E.~Pitoura,
  \href{https://doi.org/10.1109/ICDE.2009.110}{Distributed structural
  relaxation of xpath queries}, in: Proceedings of the 25th International
  Conference on Data Engineering, {ICDE} 2009, March 29 2009 - April 2 2009,
  Shanghai, China, 2009, pp. 529--540.
\newblock \href {http://dx.doi.org/10.1109/ICDE.2009.110}
  {\path{doi:10.1109/ICDE.2009.110}}.
\newline\urlprefix\url{https://doi.org/10.1109/ICDE.2009.110}

\bibitem{qrentity}
A.~Hogan, M.~Mellotte, G.~Powell, D.~Stampouli,
  \href{https://doi.org/10.1007/978-3-642-30284-8_53}{Towards fuzzy
  query-relaxation for {RDF}}, in: The Semantic Web: Research and Applications
  - 9th Extended Semantic Web Conference, {ESWC} 2012, Heraklion, Crete,
  Greece, May 27-31, 2012. Proceedings, 2012, pp. 687--702.
\newblock \href {http://dx.doi.org/10.1007/978-3-642-30284-8_53}
  {\path{doi:10.1007/978-3-642-30284-8_53}}.
\newline\urlprefix\url{https://doi.org/10.1007/978-3-642-30284-8_53}

\bibitem{oracle}
T.~Akiba, Y.~Iwata, Y.~Yoshida,
  \href{http://doi.acm.org/10.1145/2463676.2465315}{Fast exact shortest-path
  distance queries on large networks by pruned landmark labeling}, in:
  Proceedings of the {ACM} {SIGMOD} International Conference on Management of
  Data, {SIGMOD} 2013, New York, NY, USA, June 22-27, 2013, 2013, pp. 349--360.
\newblock \href {http://dx.doi.org/10.1145/2463676.2465315}
  {\path{doi:10.1145/2463676.2465315}}.
\newline\urlprefix\url{http://doi.acm.org/10.1145/2463676.2465315}

\bibitem{oraclesurvey}
C.~Sommer, \href{http://doi.acm.org/10.1145/2530531}{Shortest-path queries in
  static networks}, {ACM} Comput. Surv. 46~(4) (2014) 45:1--45:31.
\newblock \href {http://dx.doi.org/10.1145/2530531}
  {\path{doi:10.1145/2530531}}.
\newline\urlprefix\url{http://doi.acm.org/10.1145/2530531}

\bibitem{qrpath}
A.~Poulovassilis, P.~Selmer, P.~T. Wood,
  \href{https://doi.org/10.1016/j.websem.2016.08.001}{Approximation and
  relaxation of semantic web path queries}, J. Web Sem. 40 (2016) 1--21.
\newblock \href {http://dx.doi.org/10.1016/j.websem.2016.08.001}
  {\path{doi:10.1016/j.websem.2016.08.001}}.
\newline\urlprefix\url{https://doi.org/10.1016/j.websem.2016.08.001}

\bibitem{qrsparql2}
G.~Fokou, S.~Jean, A.~Hadjali, M.~Baron,
  \href{https://doi.org/10.1007/978-3-319-18818-8_15}{Cooperative techniques
  for {SPARQL} query relaxation in {RDF} databases}, in: The Semantic Web.
  Latest Advances and New Domains - 12th European Semantic Web Conference,
  {ESWC} 2015, Portoroz, Slovenia, May 31 - June 4, 2015. Proceedings, 2015,
  pp. 237--252.
\newblock \href {http://dx.doi.org/10.1007/978-3-319-18818-8_15}
  {\path{doi:10.1007/978-3-319-18818-8_15}}.
\newline\urlprefix\url{https://doi.org/10.1007/978-3-319-18818-8_15}

\end{thebibliography}

\end{document}